\newtheorem{lemma}{Lemma}
\newtheorem{theorem}{Theorem}
\newtheorem{corollary}{Corollary}
\newcounter{tempEquationCounter} 
\newcounter{thisEquationNumber}
\newenvironment{floatEq}
{\setcounter{thisEquationNumber}{\value{equation}}\addtocounter{equation}{1}
\begin{figure*}[!t]
\normalsize\setcounter{tempEquationCounter}{\value{equation}}
\setcounter{equation}{\value{thisEquationNumber}}
}
{\setcounter{equation}{\value{tempEquationCounter}}
\hrulefill\vspace*{4pt}
\end{figure*}

}
\def\bE{{\mathbf{E}}}
\begin{document}
%
\title{Spectral Efficiency Analysis of the Decoupled Access for Downlink
and Uplink in Two Tier Network}
%
%
%

\author{Zeeshan~Sattar,~\IEEEmembership{Student member,~IEEE,}
        Joao~V.C.~Evangelista,~\IEEEmembership{Student member,~IEEE,}
        ~Georges~Kaddoum,~\IEEEmembership{Member,~IEEE,}
        and~Na{\"\i}m~Batani,~\IEEEmembership{Member,~IEEE}
\thanks{
Z. Sattar, J.V.C. Evangelista, G. Kaddoum,  and N. Batani are with the Department
of Electrical Engineering, \'{E}cole de Technologie sup\'{e}rieure, Montr\'eal,
QC, H3C 1K3 CA, e-mail: \{zeeshan.sattar.1, joao-victor.de-carvalho-evangelista.1 \}@ens.etsmtl.ca and \{georges.kaddoum, naim.batani\} @etsmtl.ca}}

\maketitle

\begin{abstract}
This paper analyzes the efficacy of decoupled wireless access in a two-tier heterogeneous network. The decoupled wireless access and its performance benefits have been studied in different scenarios recently. In this paper, an in-depth analysis on its efficacy from spectral efficiency perspective is provided. To achieve this task, (i) new closed form expressions for probability of association of user equipment with different tiers employing different frequency bands (i.e., microwave and millimeter wave) with different pathloss exponents are derived using univariate Fox's H-functions; (ii) Distributions of the distance to the serving base stations are also derived; (iii) Exact expressions of spectral efficiency for different association cases are further obtained using bivariate Fox's H-functions. Furthermore, rigorous simulation results are provided which validate the aforementioned analytical results. In addition to that, a detailed discussion on the decoupling gain of decoupled wireless access and its efficacy is also provided. Lastly, despite the improvement provided by the decoupled wireless access, which is evident from the results presented in this paper, few questions are raised on its pragmatic value.
\end{abstract}

\begin{IEEEkeywords}
Heterogeneous networks, millimeter wave, decoupled access, spectral efficiency, Fox's H-function.
\end{IEEEkeywords}

%
\IEEEpeerreviewmaketitle


\section{Introduction}
\label{introduction}

\IEEEPARstart{W}{hat} shape the next generation of communication systems will take is a question which can not be answered in one line or in other words there can never be just one answer to this question. The evolution of technology and recent advancement in the available computing power gave us plenty of room to think out of the box. Therefore, when the research community brought fifth generation (5G) of communication systems on the table, lots of innovative ideas came into existence \cite{ge20165g}. Few of them will definitely see the light of the practical world and many of them will get lost somewhere inside the research laboratories of academia only to be found again on a later date for another generation of communication systems. The main quest of 5G is to provide seamless coverage, hot-spot high capacity, low end to end latency, and massive connections \cite{andrews2014will, yang2018dense}. To meet these requirements two of the most promising candidates are the network densification and the use of extremely high frequencies (EHF) which are commonly known as millimeter wave (mmW) band. Here, the network densification refers to the paradigm shift of single-tier homogeneous cellular networks towards multi-tier heterogeneous cellular networks (HetNets) \cite{andrews2012femtocells}. In HetNets, different tiers of base stations (BSs) typically use different transmit powers which result in significantly different interference levels \cite{andrews2013seven}.
Therefore whether the conventional way of cell association i.e., coupled access, where a user connects to a single BS for both uplink and downlink transmission  would be optimal in HetNets came under scrutiny of the research community. 

Since the inception of cellular communication systems, coupled access is the only way for any   user equipment (UE) to connect to a BS. This conventional way of association to a BS recently has been 
challenged in form of decoupled wireless access \cite{elshaer2014downlink,boccardi2016decouple,elshaer2016downlink}. The concept of decoupled wireless access argues on the optimality of choosing the same BS for both uplink and downlink transmissions, and proposes to give the liberty to UEs to simultaneously connect to  two different BSs from any two different tiers of BSs for uplink and downlink transmissions. Though, both intuition and probability theory supports this idea for a simple fact that if we increase the size of the set of BSs to choose from, it would definitely result in a better performance in terms of coverage and spectral efficiency. In addition, the decoupled wireless access also breaks the channel reciprocity by its very design, so indirectly it also raises questions on the way we typically estimate the channel. Therefore, despite the potential benefits of the decoupled wireless access as it is evident in theory, its pragmatic value is still in question \cite{zsatar2017pimrc}. This inspires us to scrutinize the potential benefits of the decoupled wireless access and compare it to the conventional coupled access. To make things mathematically simple and tractable yet robust, we didn't employ any blockage model but used different pathloss exponents and transmit powers to emulate the characteristics of two different kinds of BSs (i.e., microwave BSs and mmW BSs) operating in different tiers.

The rationale behind omitting the blockage model is depicted in \cite{zsatar2017pimrc} which is an early study for this work, where a very practical blockage model was used as described in \cite{gapeyenko2016analysis}. We reached to the conclusion that though it does add the pragmatic value to the analytical model, it does not significantly affect the analysis of the average spectral efficiency. More detailed discussion on this assumption is given in section \ref{propagation_assumptions}.

\subsection{Related Work}
The idea of multi-tier cellular system is not new and it has been under the lens of both academia and industry for a long time. For example, authors in \cite{chandrasekhar2009coverage} investigated the case of two-tier cellular systems with universal frequency reuse. They study the case of single user (SU)
and multi-user (MU) multiple antenna methods to mitigate cross-tier interference and the ‘near-far’ deadspot coverage in a two tier network. They further provided location-assisted power control scheme for regulating femtocell BSs transmit powers.

Even though a lot of work has been done on the performance analysis of cellular systems, the scientific community is still working on new theoretical tools to evaluate the network performance of cellular systems. For example, authors in \cite{ge2015spatial} proposed new spatial spectrum and energy efficiency models for
Poisson-Voronoi tessellation (PVT) random cellular networks. In \cite{di2015stochastic}, the author provided a very detailed mathematical framework based on stochastic geometry to model multi-tier millimeter wave cellular network. An exact analytical model to derive coverage probability and average rate in form of numerical integrations are derived. Furthermore, to provide results in closed-form, approximated analytical models are also derived. In addition to that, a detailed discussion on the noise-limited approximation for typical millimeter wave network deployments is also provided.

Recently there has been significant amount of progress on the analysis of the decoupled wireless access  \cite{zhang2017uplink,li2018uplink,shi2018decoupled,aravanis2018closed}. Right after its proposal in \cite{elshaer2014downlink}, the first analytical analysis of the decoupled wireless access has been done by Smiljkovikj \textit{et al.} in \cite{smiljkovikj2015analysis, smiljkovikj2015efficiency}. 

In \cite{smiljkovikj2015analysis}, Smiljkovikj \textit{et al.} analyzed a two-tier network with macro cells and small cells. They provided analytical expressions for probability of associations of UEs to different tiers and average throughput of UEs associated to different tiers. In \cite{smiljkovikj2015efficiency}, the authors provided a deeper analysis  on the benefits of the decoupled wireless access by analyzing its spectral and energy efficiencies. 

In a more recent work \cite{zhang2017uplink}, the authors provided a comparative analysis of the decoupled and the coupled wireless access for two kinds of UE's distributions, namely uniform and clustered distributions, which are modeled as Poisson point and Neyman–Scott cluster processes, respectively. They borrowed analytical expressions of probability of cell association and distance distribution of a UE and its serving BS from \cite{jo2012heterogeneous} and derived new analytical expressions for average user rate for two UE's distributions.

In \cite{li2018uplink}, the authors analyzed the decoupled wireless access in multiuser multiple-input multiple-output (MIMO) HetNet scenario. They derived cell association probabilities with respect to the load balancing in BSs. They also compared the decoupled and coupled wireless access based on the load balancing in BSs and provided new analytical expressions for uplink spectral efficiency. Moreover, they also derived the lower bounds on the uplink spectral efficiency where interference is shown to be suppressed by multiple antennas at BSs.

In \cite{shi2018decoupled}, the authors provided a theoretical work on the impact of decoupled access in multi-tier HetNet. Using tools from stochastic geometry they derived general expression of association probability to a particular tier of BSs. Furthermore, a detailed analytical work on the impact of decoupled access on the coverage probability is also provided. 

In \cite{aravanis2018closed}, the authors derived analytical bounds in closed form for the uplink ergodic capacity as a function of the density of BSs of different tiers for the decoupled access scenario. The novelty of their work is to accommodate the backbone network congestion and the synchronization of the acknowledgments of the decoupled channels into their analytical expression.

A semi-analytical analysis of the decoupled wireless access is provided in \cite{zsatar2017pimrc}. The decoupled and coupled wireless access are compared for two-tier network employing a realistic blockage  model proposed in \cite{gapeyenko2016analysis}, where human body is considered as a blockage to a tier of BSs operating on mmW frequencies. Despite the practical nature of the blockage model used, authors came to this conclusion that it made the analytical analysis intractable. Therefore authors in \cite{elshaer2016downlink} used a rather simple blockage model proposed in \cite{singh2015tractable} to develop a general analytical model to
characterize and derive the uplink and downlink cell associations. Even for that simple blockage model, there analytical expressions for the cell association are too complicated for further analysis. For example, to study the distance distribution of a UE to its serving BS and to derive the expressions for spectral efficiency based on the analytical model in \cite{elshaer2016downlink}, the only option is to solve a plethora of nested numerical integrations.

\subsection{Novelty and Contributions}

The main goal of this paper is to investigate the spectral efficiency gain of the decoupled wireless access over its coupled counterpart. Additionally it provides more compact and robust analytical model. The novelty of our analytical analysis is the fact that it accommodates variable transmit powers and different pathloss exponents for different tiers in its formulation. Moreover, instead of a plethora of numerical integrations, Fox's H-function and its integration properties are used to solve numerical integrations into compact form. We believe that the proposed analytical model provides valuable insights into the mathematical analysis of the problem under consideration and several other related work.  Furthermore, one can easily compute the first and second order derivatives of Fox's H-function, which are often used in numerical optimization methods \cite{kilbas2004h}. The insights obtained from the outcome of this paper, regarding the solution of complex numerical integrations into Fox's H-function form would inspire researchers to provide compact analytical models. The key contributions of this paper are listed as follows. 
\begin{itemize}
\item New closed form expressions of joint probability of uplink and downlink cell associations are derived for a two-tier network. 

\item Univariate Fox's H-function is used in our analytical closed form expression to accommodate different pathloss exponents and variable transmit powers for different tiers (i.e., conventional mircowave BSs and mmW BSs).

\item  The distance distributions of a UE to its serving BSs are also derived for three  possible cases of the cell associations.

\item Finally exact  expressions of spectral efficiencies for three possible cases of cell association are derived using bivariate Fox's H-function. The motivation to use Fox's H-function is to formulate analytical expressions in compact and modular form\footnote{The univariate Fox's H-function is implemented for Mathematica in \cite{yilmaz2009product, ansari2013sum}, and for MATLAB in \cite{peppas2012simple}, whereas the implementation of the bivariate Fox's H-function is given for MATLAB in \cite{peppas2012new}.}. In addition to that, three of the most important features of this function are: (i) many of the special functions in similar category of mathematical analysis are its special cases; (ii) the integral of the product of two H-functions is again a Fox's H-function \cite{mathai2009h,kilbas2004h}; (iii) many generalized channel models can be formulated in a compact form using H-functions \cite{long2018,ICCWORKSHOPlong2018, 7108024}.

\end{itemize}

\subsection{Organization}
The rest of the paper is organized as follows.

In Section \ref{system_model}, the system model is described in detail, which includes the propagation assumptions and cell association criteria. In Section \ref{sec:analytical_analysis}, the joint association probabilities, distance distributions of a typical UE and its serving BSs are derived. Furthermore, analytical expressions for average user rates and spectral efficiencies are also provided in this section.  In Section \ref{sec:num_sim_results}, discussion on the obtained numerical and simulation results is provided and Section \ref{sec:conclusions} concludes the paper.

\section{System Model}
\label{system_model}

\begin{figure}[t]\centering
\includegraphics[width=\columnwidth]{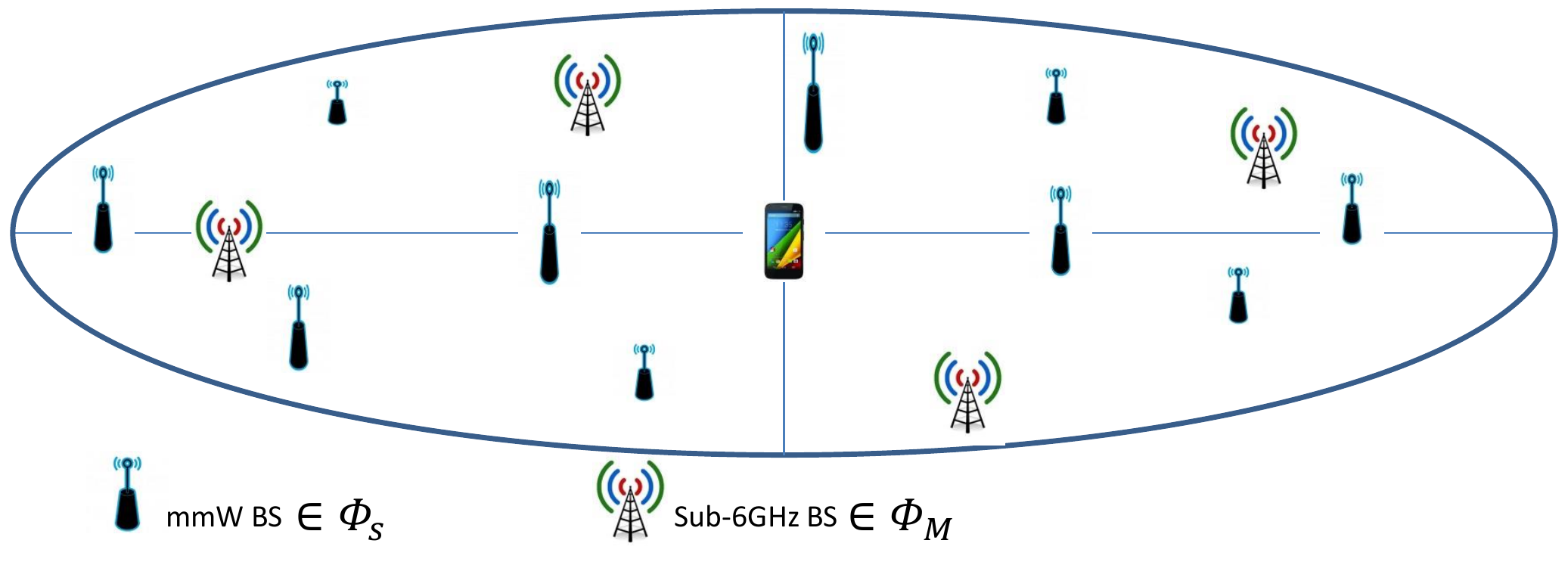}
\caption{System Model}
\label{fig:SystemModel}
\end{figure}

We consider a two-tier HetNet, where sub-6GHz (i.e., conventional microwave or $Mcell$) BSs and mmW (i.e., $Scell$) BSs are modeled using independent homogeneous Poisson point process (PPP) as shown in Fig. \ref{fig:SystemModel}. All the BSs are uniformly distributed in a circular area with radius $\mu$ . We use ${\Phi _k}$ to denote the set of points obtained through PPP with density ${\lambda _k}$, that can be explicitly written as
\begin{eqnarray}
\label{eq:definition}
{\Phi _k} \buildrel \Delta \over = \{ {x_{k,i}} \in {\mathbb{R}^2}:i \in {\mathbb{N}_ + }\} ,
\nonumber
\end{eqnarray}
where the index $k \in \{ M, S\}$)  for $Mcell$ and $Scell$ BSs, respectively. Moreover, all the UEs are also assumed to form an independent PPP with density ${\lambda _u}$ and they are denoted by a set ${\Phi _u}$ given as

\begin{eqnarray}
\label{eq:definition2}
{\Phi _u} \buildrel \Delta \over = \{ u_j \in {\mathbb{R}^2}:j \in {\mathbb{N}_ + }\} .
\nonumber
\end{eqnarray}

The transmit powers for downlink and uplink transmissions are $P_M$ and $Q_M$, respectively, for the tier of $Mcell$ BSs. Similarly, for the tier of $Scell$ BSs, $P_S$ and $Q_S$, respectively, are the transmit powers for downlink and uplink transmissions.  Besides that, since the distribution of a point process is completely indifferent to the addition of a node at the origin, allowed by Slivnyak's theorem \cite{chiu2013stochastic}, the analysis is done for a typical UE located at $u_j=(0,0)$.

\subsection{Propagation Assumptions}
\label{propagation_assumptions}
In this subsection all the major assumptions critical to the analytical analysis are listed.
\begin{itemize}
\item  { As mentioned in section \ref{introduction}, the rationale behind the assumption about blockage model is that for the higher values of $\lambda_S/\lambda_M$ the probability of association curves takes the same shape with and without any blockage model \cite{zsatar2017pimrc}. In addition to that, even for the smaller values of $\lambda_S/\lambda_M$, the probability of association curves shows that the inclusion of blockage model decreases the probability of the case where UEs choose to decouple \cite{zsatar2017pimrc}. Hence, the inclusion of blockage model will not significantly affect our results in anyway. Therefore, in the literature, the use of blockage model has been done for the cases when either the study is primarily about the channel modeling or just probability of associations.}

\item {
With regards to the effect of shadowing in our analysis, it should be mentioned that the authors in \cite{andrews2011tractable} pointed out that even a simplified model that only considers Rayleigh fading can closely track an actual base station deployment with lognormal shadowing. Therefore as far as the sub-6GHz networks are concerned, the randomness of the PPP BS locations emulates the shadowing effect, hence shadowing is ignored in the sub-6GHz model. On the other hand in mmW networks, a blockage model introduces similar effect to shadowing. Since in our work we omitted blockage model, and explained in detail the rationale behind this assumption, shadowing is ignored for mmW networks too. Moreover, to further demonstrate the negligible effect of shadowing in decoupled wireless access, a detailed discussion along with related results are provided in section \ref{sec:num_sim_results}.   
}

\item It is assumed that $P_M>P_S$, as $Mcell$ BSs suppose to have more transmit power to provide coverage to all the UE in its cell. It is also assumed that $Q_M{\geq}Q_S$, which is  not only an intuitive assumption as UEs need more power to transmit to far BSs but also based on the difference between the maximum allowed transmit power for mmW and sub-6GHz UEs \cite{colombi2015implications}.

\item In our system model beamforming gains from massive array of antenna elements are only accounted for $Scell$ BSs.  Though, even in sub-6GHz domain, antenna pattern has certain shape, it is assumed that all UEs and $Mcell$ BSs have omni directional antennas. The rationale behind this assumption is that in hybrid BSs' deployment, the $Mcell$ BSs will provide an umbrella coverage to all UEs, on the other hand  $Scell$ BSs will mainly focus on the high capacity links with individual UEs.

\item It is assumed that in both uplink and downlink, a typical UE associates with a BS based on the received power.

\item Another assumption that mmW networks are noise-limited and sub-6GHz networks are interference limited is also considered. { Additionally, from the perspective  of medium access control (MAC), authors in \cite{singh2011interference} provides a detailed  interference analysis, which shows that highly directional links can indeed be modeled as pseudowired. On the other hand, authors in \cite{shokri2016transitional} discussed that mmW networks may exhibit non-negligible transitional behavior from a noise-limited regime to an interference-limited. }{ The practical aspect of the noise-limited mmW networks is motivated by the work in \cite{di2015stochastic, andrews2017modeling} where the authors discuss in detail the pragmatic value of this assumption which also simplifies the mathematical analysis. Furthermore, the authors in \cite{rangan2014millimeter, akdeniz2014millimeter} did simulation based studies on a measurement-based mmW channel model.  It was observed that the impact of thermal noise on coverage dominates that of out of cell interference in mmW networks.}

The case of noise-limited mmW networks has been considered and motivated in \cite{singh2015tractable} and later validated by \cite{elshaer2016downlink} even for high densities of mmW $Scells$. Moreover, due to the  orthogonality of both sub-6GHz and mmW networks, no interference is assumed between the two tiers.
\end{itemize}

\subsection{Cell Association Criteria}

The typical UE associates with a BS in uplink at $x*\in \Phi_l$, where $l \in \{S, M\}$ if and only if
\begin{eqnarray}
\label{eq:ul_association}
{Q_{l}}G_{l}||x*|{|^{ - {\alpha _l}}} &\ge& {Q_{k}}G_{k}||x_{k,i}|{|^{ - {\alpha _k}}},\cr 
&&\quad \forall k \in \{ S, M\}.
\end{eqnarray}

Similarly, a typical UE associates with a BS in downlink at $x* \in \Phi_l$ if and only if

\begin{eqnarray}
\label{eq:dl_association}
{P_{l}}G_{l}||x*|{|^{ - {\alpha _l}}} &\ge& {P_{k}}G_{k}||x_{k,i}|{|^{ - {\alpha _k}}},\cr 
&&\quad \forall k \in \{ S, M\},
\end{eqnarray}
where $G_k$ and $\alpha_k$ are antenna gain and path loss exponent for the communication link UE-$kcell$ BS, respectively. 

Moreover, based on the assumption of noise-limited mmW network and  interference limited sub-6Ghz networks, the uplink/downlink mmW signal-to-noise-ratios (SNRs) and sub-6GHz signal-to-interference and noise ratios (SINRs) take the following form

\begin{eqnarray}
\label{eq:snirs} \nonumber
{\rm{SIN}}{{\rm{R}}_{{\rm{UL}},M}} & =& \frac{{{{\rm{Q}}_M}{G_M}{h_{0,x*}}||x*|{|^{ - {\alpha _M}}}}}{{{I_{{\rm{UL}},M}} + \sigma _M^2}}\\ \nonumber
{\rm{SIN}}{{\rm{R}}_{{\rm{DL}},M}} &=& \frac{{{{\rm{P}}_M}{G_M}{h_{x*,0}}||x*|{|^{ - {\alpha _M}}}}}{{{I_{{\rm{DL}},M}} + \sigma _M^2}}\\ \nonumber
{\rm{SN}}{{\rm{R}}_{{\rm{UL}},S}} &=& \frac{{{{\rm{Q}}_S}{G_S}{h_{0,x*}}||x*|{|^{ - {\alpha _S}}}}}{{\sigma _S^2}}\\ 
{\rm{SN}}{{\rm{R}}_{{\rm{DL}},S}} &=& \frac{{{{\rm{P}}_S}{G_S}{h_{x*,0}}||x*|{|^{ - {\alpha _S}}}}}{{\sigma _S^2}},
\end{eqnarray}
where $\sigma_k^2$ and $h$ are the noise variance for the communication link UE-$kcell$ BS and  small scale fading power gain, respectively. For the rest of the paper we denote $\bar P_k=P_k G_k$ and $\bar Q_k=Q_k G_k$.


\section{Analytical Analysis}
\label{sec:analytical_analysis}
In the quest of deriving the analytical expressions for the spectral efficiency in different association scenarios, the first step is to derive the analytical expressions for the joint probability of association. The second step is to derive the distance distributions to the serving BSs for different association scenarios.

\subsection{Joint Association Probabilities}
\label{subsec:joint_association_probabilities}
Considering the model proposed in Section~\ref{system_model}, if a typical UE has a liberty to choose at most two different BSs for uplink and downlink transmissions from two different tier of BSs, then the association process can lead to one of the following four cases:
\begin{itemize}
\item Case 1: Uplink BS = Downlink BS = $Mcell$ BS
\item Case 2: Uplink BS = $Scell$ BS, Downlink BS = $Mcell$ BS
\item Case 3: Uplink BS = $Mcell$ BS, Downlink BS = $Scell$ BS
\item Case 4: Uplink BS = Downlink BS = $Scell$ BS
\end{itemize}
The joint association probabilities for homogeneous (i.e., all UEs communicate with the same transmit power) and heterogeneous user domain (i.e., UEs vary their transmit power levels with respect to the BS's tier they are connected to ) are already elaborated in \cite{smiljkovikj2015analysis} and \cite{smiljkovikj2015efficiency}, respectively. Since, in the two-tier network under consideration, both tiers which are operating on significantly different frequency bands (sub-6GHz and mmW) possess drastically different propagation characteristics.  Hence, the novelty of this work  is to accommodate different pathloss exponents in the closed form expressions of joint association probabilities, which further leads to robust expressions for spectral efficiency of the association cases under consideration.  

Let $\{X_k\}_{k\in \{M, S\}}$ denotes the distance from the nearest BS in the $k^{th}$tier to the typical UE located at $u_j=(0,0)$. We can derive the probability density function (pdf) of $X_k$ and cumulative distribution function (cdf) by the null probability of a 2D PPP \cite{chiu2013stochastic} as follows:

\begin{eqnarray}
\label{eq:nearest_distance_BS}
{f_{{X_k}}}(x) &=& 2\pi {\lambda _k}x\exp ( - \pi {\lambda _k}{x^2}), \quad x \ge 0,\\
\label{eq:nearest_distance_BS_2}
{F_{{X_k}}}(x) &=& 1 - \exp ( - \pi {\lambda _k}{x^2}),\quad x \ge 0.
\end{eqnarray}

Based on the cell association rules given by  (\ref{eq:ul_association}) and (\ref{eq:dl_association}) we can derive the joint cell association probabilities for the four cases under consideration as follows:

\subsubsection{Case 1: Uplink BS = Downlink BS = $Mcell$ BS} The probability that a UE associates to $Mcell$ BS for both uplink and downlink transmissions is given by

\begin{eqnarray}
\label{eq:pr_association_case1}
\Pr ({\rm{Case}}1)&=&\Pr \left(X_M^{ - {\alpha _M}} > \frac{{{{\bar P}_S}}}{{{{\bar P}_M}}}X_S^{ - {\alpha _S}};\right. \cr 
&& 
\quad  \quad  \quad  \quad \left. X_M^{ - {\alpha _M}} > \frac{{{{\bar Q}_S}}}{{{{\bar Q}_M}}}X_S^{ - {\alpha _S}}\right).
\end{eqnarray}

Based on the discussion on different power levels, $\frac{{{{\bar Q}_S}}}{{{{\bar Q}_M}}} > \frac{{{{\bar P}_S}}}{{{{\bar P}_M}}}$, therefore the joint probability reduces to the following form
\begin{eqnarray}
\label{eq:pr_association_case1_1}
\Pr ({\rm{Case}}1)&=& \left(X_M^{ - {\alpha _M}} > \frac{{{{\bar Q}_S}}}{{{{\bar Q}_M}}}X_S^{ - {\alpha _S}}\right) \cr
&=& \left({X_M} < {\Biggl( {\frac{{{{\bar Q}_M}}}{{{{\bar Q}_S}}}} \Biggr)^{\frac{1}{{{\alpha _M}}}}}X_S^{  \frac{{{\alpha _S}}}{{{\alpha _M}}}}\right).
\end{eqnarray}

\begin{lemma}
\label{lemma:1}
The joint probability of association of a typical UE for the association case 1 in closed form can be formulated as

\begin{eqnarray}
\label{eq:pr_association_case1_H}
\Pr ({\rm{Case}}1) &=& 1-{\frac{1}{2}}{\frac{\alpha_M}{\alpha_S}}{\rm{H}}_{1,1}^{1,1}\Biggl[ {{z_{1}}\Biggl| {\begin{array}{*{20}{c}}
{(0,\frac{1}{2})}\\
{(0,\frac{1}{2}\frac{{{\alpha _M}}}{{{\alpha _S}}})}
\end{array}} \Biggr.} \Biggr],
\end{eqnarray}
where $z_{1} = {\Biggl( {\sqrt {\pi {\lambda _M}} } \Biggr)^{\frac{{{\alpha _M}}}{{{\alpha _S}}}}}{\Biggl( {\frac{{{{\bar Q}_M}}}{{{{\bar Q}_S}}}} \Biggr)^{\frac{1}{{{\alpha _S}}}}}{\Biggl( {\sqrt {\pi {\lambda _S}} } \Biggr)^{ - 1}}$.
\end{lemma}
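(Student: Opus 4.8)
The plan is to reduce the probability in (\ref{eq:pr_association_case1_1}) to a one-dimensional integral by conditioning on the nearest $Scell$ distance $X_S$ and using that $X_M$ and $X_S$ are independent (the two tiers are independent PPPs). Writing $\rho \triangleq \alpha_S/\alpha_M$, the event becomes $\{X_M < (\bar{Q}_M/\bar{Q}_S)^{1/\alpha_M}\,X_S^{\rho}\}$, so that
\begin{equation*}
\Pr(\mathrm{Case}\,1) = \int_0^\infty F_{X_M}\!\left(\left(\frac{\bar{Q}_M}{\bar{Q}_S}\right)^{1/\alpha_M} x^{\rho}\right) f_{X_S}(x)\, dx .
\end{equation*}
Substituting the null-probability cdf (\ref{eq:nearest_distance_BS_2}) and pdf (\ref{eq:nearest_distance_BS}), the contribution of the ``$1$'' inside $F_{X_M}$ integrates to $1$ because $f_{X_S}$ is a density; this is exactly where the leading $1$ of (\ref{eq:pr_association_case1_H}) comes from. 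What remains is
\begin{equation*}
\Pr(\mathrm{Case}\,1) = 1 - \int_0^\infty 2\pi\lambda_S x\, \exp\!\left(-\pi\lambda_S x^2 - \pi\lambda_M \left(\frac{\bar{Q}_M}{\bar{Q}_S}\right)^{2/\alpha_M} x^{2\rho}\right) dx .
\end{equation*}

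Next I would normalize the integral by the change of variable $t=\pi\lambda_S x^2$, which turns it into a ``stretched-exponential'' (Laplace-type) integral
\begin{equation*}
\int_0^\infty e^{-t}\, e^{-a t^{\rho}}\, dt, \qquad a \triangleq \pi\lambda_M \left(\frac{\bar{Q}_M}{\bar{Q}_S}\right)^{2/\alpha_M}(\pi\lambda_S)^{-\rho}.
\end{equation*}
The core step is then to express $e^{-a t^{\rho}}$ through its Mellin--Barnes representation $e^{-a t^{\rho}} = \frac{1}{2\pi i}\int_{\mathcal L}\Gamma(s)\,(a t^{\rho})^{-s}\, ds$ (equivalently $e^{-a t^{\rho}}={\rm H}_{0,1}^{1,0}[a t^{\rho}\,|\,(0,1)]$), interchange the order of the $t$- and $s$-integrals on a vertical contour with $0<\Re s<1/\rho$, and evaluate the inner integral by $\int_0^\infty e^{-t}t^{-\rho s}\,dt=\Gamma(1-\rho s)$. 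This collapses everything to a single Mellin--Barnes integral that matches the definition of the Fox H-function:
\begin{equation*}
\int_0^\infty e^{-t}\, e^{-a t^{\rho}}\, dt = \frac{1}{2\pi i}\int_{\mathcal L}\Gamma(s)\,\Gamma(1-\rho s)\, a^{-s}\, ds = {\rm H}_{1,1}^{1,1}\!\left[ a \;\Big|\; \begin{smallmatrix}(0,\rho)\\(0,1)\end{smallmatrix} \right].
\end{equation*}
(Alternatively this whole step can be replaced by directly invoking the tabulated Laplace transform of a Fox H-function, e.g.\ in \cite{mathai2009h,kilbas2004h}, applied to the above ${\rm H}_{0,1}^{1,0}$.)

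Finally I would bring the answer into the exact shape of (\ref{eq:pr_association_case1_H}) using the elementary power-scaling identity ${\rm H}_{p,q}^{m,n}\!\left[z\,\big|\,\begin{smallmatrix}(a_j,A_j)\\(b_j,B_j)\end{smallmatrix}\right] = \kappa\,{\rm H}_{p,q}^{m,n}\!\left[z^{\kappa}\,\big|\,\begin{smallmatrix}(a_j,\kappa A_j)\\(b_j,\kappa B_j)\end{smallmatrix}\right]$ for $\kappa>0$ (immediate from the substitution $s\mapsto s/\kappa$ in the contour integral), applied with $\kappa=\alpha_M/(2\alpha_S)=1/(2\rho)$. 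This rescales the numerator parameter $\rho$ to $\kappa\rho=\tfrac{1}{2}$ and the denominator parameter $1$ to $\kappa=\tfrac{1}{2}\tfrac{\alpha_M}{\alpha_S}$, pulls out the prefactor $\kappa=\tfrac{1}{2}\tfrac{\alpha_M}{\alpha_S}$, and changes the argument to $a^{\kappa}$; a one-line check shows $a^{\alpha_M/(2\alpha_S)}$ equals the stated $z_1 = (\sqrt{\pi\lambda_M})^{\alpha_M/\alpha_S}(\bar{Q}_M/\bar{Q}_S)^{1/\alpha_S}(\sqrt{\pi\lambda_S})^{-1}$, yielding $\Pr(\mathrm{Case}\,1) = 1 - \tfrac{1}{2}\tfrac{\alpha_M}{\alpha_S}\,{\rm H}_{1,1}^{1,1}[z_1\,|\,\begin{smallmatrix}(0,\tfrac{1}{2})\\(0,\tfrac{1}{2}\tfrac{\alpha_M}{\alpha_S})\end{smallmatrix}]$. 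I expect the only genuinely delicate points to be the justification of the Fubini interchange (choosing the contour so that the poles of $\Gamma(s)$ and of $\Gamma(1-\rho s)$ are properly separated and the integrand decays on the vertical line) and the careful bookkeeping of the scaling exponent --- in particular the non-obvious fact that the argument $z_1$ in (\ref{eq:pr_association_case1_H}) is $a$ raised to the fractional power $\alpha_M/(2\alpha_S)$ rather than $a$ itself; everything else is routine.
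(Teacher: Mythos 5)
Your proposal is correct and follows essentially the same route as the paper: both reduce $\Pr(\mathrm{Case}\,1)$, via the independence of the two PPPs and the null-probability cdf/pdf, to the one-dimensional integral $1-2\pi\lambda_S\int_0^\infty x\exp(-\pi\lambda_S x^2)\exp(-\pi\lambda_M(\bar Q_M/\bar Q_S)^{2/\alpha_M}x^{2\alpha_S/\alpha_M})\,dx$ and then evaluate it with the Fox H-function Mellin--Barnes calculus. The only difference is bookkeeping: the paper represents \emph{both} exponentials as $H_{0,1}^{1,0}$ functions via eq.\ (2.9.4) of Kilbas--Saigo and invokes their Theorem 2.9 on the integral of a product of two H-functions, whereas you unwind that theorem by hand --- a single Mellin--Barnes representation, the explicit evaluation $\int_0^\infty e^{-t}t^{-\rho s}dt=\Gamma(1-\rho s)$, and a final power-rescaling with $\kappa=\alpha_M/(2\alpha_S)$ --- and your check that $a^{\kappa}=z_1$ and that the parameter pairs become $(0,\tfrac12)$ and $(0,\tfrac12\tfrac{\alpha_M}{\alpha_S})$ matches the stated result exactly.
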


\begin{proof}
The development and proof are shown in Appendix \ref{Lemma1_3}.
\end{proof}


\subsubsection{Case 2: Uplink BS = $Scell$ BS, Downlink BS = $Mcell$ BS} The probability that a UE associates to $Scell$ BS for uplink and $Mcell$ BS for downlink transmission is given by

\begin{eqnarray}
\label{eq:pr_association_case2}
\Pr ({\rm{Case}}2) &=& \Pr \left( {X_M^{ - {\alpha _M}} > \frac{{{{\bar P}_S}}}{{{{\bar P}_M}}}X_S^{ - {\alpha _S}};} \right. \cr
&& \quad \quad \quad \quad {\rm{ }}\left. {X_M^{ - {\alpha _M}} \le \frac{{{{\bar Q}_S}}}{{{{\bar Q}_M}}}X_S^{ - {\alpha _S}}} \right).
\end{eqnarray}

Since, (\ref{eq:pr_association_case2}) is defined over an intersection of two regions, it can be written as 

\begin{eqnarray}
\label{eq:pr_associaion_case2_1}
\Pr ({\rm{Case}}2) &=& \Pr \left( {X_M^{ - {\alpha _M}} > \frac{{{{\bar P}_S}}}{{{{\bar P}_M}}}X_S^{ - {\alpha _S}}} \right) \cr 
&& -\Pr \left( {X_M^{ - {\alpha _M}} > \frac{{{{\bar Q}_S}}}{{{{\bar Q}_M}}}X_S^{ - {\alpha _S}}} \right)
\end{eqnarray}

\begin{lemma}
\label{lemma_2}
The joint probability of association of a typical UE for the association case 2 in closed form can be formulated as

\begin{eqnarray}
\label{eq;pr_association_case2_H}
\Pr (\rm{Case}2)&=& {\frac{1}{2}}{\frac{\alpha_M}{\alpha_S}}\left( H_{1,1}^{1,1}\left[ {{z_1}\left| {\begin{array}{*{20}{c}}
{(0,\frac{1}{2})}\\
{(0,\frac{1}{2}\frac{{{\alpha _M}}}{{{\alpha _S}}})}
\end{array}} \right.} \right] \right. \cr
&& \left.- H_{1,1}^{1,1}\left[ {{z_2}\left| {\begin{array}{*{20}{c}}
{(0,\frac{1}{2})}\\
{(0,\frac{1}{2}\frac{{{\alpha _M}}}{{{\alpha _S}}})}
\end{array}} \right.} \right]\right),
\end{eqnarray}
where $z_{2} = {\Biggl( {\sqrt {\pi {\lambda _M}} } \Biggr)^{\frac{{{\alpha _M}}}{{{\alpha _S}}}}}{\Biggl( {\frac{{{{\bar P}_M}}}{{{{\bar P}_S}}}} \Biggr)^{\frac{1}{{{\alpha _S}}}}}{\Biggl( {\sqrt {\pi {\lambda _S}} } \Biggr)^{ - 1}}$.
\end{lemma}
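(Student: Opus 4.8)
The plan is to reduce Lemma~\ref{lemma_2} to Lemma~\ref{lemma:1}, by noticing that the two probabilities appearing on the right-hand side of \eqref{eq:pr_associaion_case2_1} are instances of the \emph{same} generic event $\{X_M < c\,X_S^{\alpha_S/\alpha_M}\}$ for two different positive constants $c$. Exactly as in the passage leading to \eqref{eq:pr_association_case1_1}, the inequality $X_M^{-\alpha_M} > (\bar P_S/\bar P_M)\,X_S^{-\alpha_S}$ is equivalent to $X_M < (\bar P_M/\bar P_S)^{1/\alpha_M}\,X_S^{\alpha_S/\alpha_M}$, and $X_M^{-\alpha_M} > (\bar Q_S/\bar Q_M)\,X_S^{-\alpha_S}$ is equivalent to $X_M < (\bar Q_M/\bar Q_S)^{1/\alpha_M}\,X_S^{\alpha_S/\alpha_M}$.

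First I would observe that the second term on the right of \eqref{eq:pr_associaion_case2_1} is, by \eqref{eq:pr_association_case1_1}, precisely $\Pr(\mathrm{Case}\,1)$, which Lemma~\ref{lemma:1} already evaluates in closed form as $1-\tfrac12\tfrac{\alpha_M}{\alpha_S}\,H_{1,1}^{1,1}\!\big[z_1\,\big|\,\cdots\big]$ with the parameter list displayed in \eqref{eq:pr_association_case1_H}. Next I would re-run, verbatim, the computation of Appendix~\ref{Lemma1_3} on the first term $\Pr\!\big(X_M^{-\alpha_M} > (\bar P_S/\bar P_M)X_S^{-\alpha_S}\big)$: condition on $X_S=x$ and use the cdf \eqref{eq:nearest_distance_BS_2} to write the inner probability as $1-\exp\!\big(-\pi\lambda_M (\bar P_M/\bar P_S)^{2/\alpha_M} x^{2\alpha_S/\alpha_M}\big)$, average against the pdf \eqref{eq:nearest_distance_BS} of $X_S$, represent the exponential by its Mellin--Barnes integral, and collect the resulting contour integral into a single $H_{1,1}^{1,1}$. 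Since this is the Lemma~\ref{lemma:1} derivation with the one substitution $\bar Q_M/\bar Q_S \mapsto \bar P_M/\bar P_S$, it returns $1-\tfrac12\tfrac{\alpha_M}{\alpha_S}\,H_{1,1}^{1,1}\!\big[z_2\,\big|\,\cdots\big]$ with $z_2$ exactly the constant named in the statement.

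Substituting both closed forms into \eqref{eq:pr_associaion_case2_1}, the two ``$1$''-terms cancel and the common prefactor $\tfrac12\tfrac{\alpha_M}{\alpha_S}$ factors out, giving $\Pr(\mathrm{Case}\,2)=\tfrac12\tfrac{\alpha_M}{\alpha_S}\big(H_{1,1}^{1,1}[z_1|\cdots]-H_{1,1}^{1,1}[z_2|\cdots]\big)$, as claimed. I do not anticipate a genuine obstacle: the only substantive ingredient, the evaluation of $\Pr(X_M<c\,X_S^{\alpha_S/\alpha_M})$ as a Fox $H$-function, is proved once in Appendix~\ref{Lemma1_3}, and Lemma~\ref{lemma_2} is merely the difference of two copies of it. The single point worth a sentence of justification is the decomposition \eqref{eq:pr_associaion_case2_1} itself: it uses $\bar Q_S/\bar Q_M>\bar P_S/\bar P_M$ (equivalently $z_1<z_2$, recorded just before \eqref{eq:pr_association_case1_1}) so that the ``$\le$'' region in \eqref{eq:pr_association_case2} is genuinely the set-difference of the two half-spaces $\{X_M^{-\alpha_M}> (\bar P_S/\bar P_M)X_S^{-\alpha_S}\}$ and $\{X_M^{-\alpha_M}> (\bar Q_S/\bar Q_M)X_S^{-\alpha_S}\}$, and hence that the stated expression is nonnegative (here $H_{1,1}^{1,1}[\,\cdot\,|\cdots]$ is decreasing in its argument).
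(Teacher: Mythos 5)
Your proposal is correct and follows essentially the same route as the paper: the paper's proof of Lemma~\ref{lemma_2} is precisely to take the set-difference decomposition \eqref{eq:pr_associaion_case2_1} (valid because $\bar Q_S/\bar Q_M>\bar P_S/\bar P_M$ nests the two events) and evaluate each term by rerunning the Appendix~\ref{Lemma1_3} computation, once with $\bar Q_M/\bar Q_S$ (giving $z_1$) and once with $\bar P_M/\bar P_S$ (giving $z_2$), after which the two unit terms cancel. Your added remark on nonnegativity via monotonicity of $H_{1,1}^{1,1}$ is a small bonus the paper does not spell out.
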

\begin{proof}
Similar to the proof of Lemma \ref{lemma:1} provided in Appendix \ref{Lemma1_3}.
\end{proof}


\subsubsection{Case 3: Uplink BS = $Mcell$ BS, Downlink BS = $Scell$ BS} The probability that a UE associates to $Mcell$ BS for uplink and $Scell$ BS for downlink transmission is given by

\begin{eqnarray}
\label{eq:pr_association_case3}
\Pr ({\rm{Case}}3) &=& \Pr \left( {X_M^{ - {\alpha _M}} \leq \frac{{{{\bar P}_S}}}{{{{\bar P}_M}}}X_S^{ - {\alpha _S}};} \right. \cr
&& \quad \quad \quad \quad {\rm{ }}\left. {X_M^{ - {\alpha _M}} > \frac{{{{\bar Q}_S}}}{{{{\bar Q}_M}}}X_S^{ - {\alpha _S}}} \right).
\end{eqnarray}

Since, there is no region which satisfies the domain of joint probability in  (\ref{eq:pr_association_case3}), the $\Pr(\rm{Case}3)=0;$


\subsubsection{Case 4: Uplink BS = Downlink BS = $Scell$ BS} The probability that a UE associates to $Scell$ BS for both uplink and downlink transmissions is given by

\begin{eqnarray}
\label{eq:pr_association_case4}
\Pr ({\rm{Case}}4)&=&\Pr \left(X_S^{ - {\alpha _S}} \geq \frac{{{{\bar P}_M}}}{{{{\bar P}_S}}}X_M^{ - {\alpha _M}};\right. \cr 
&& 
\quad  \quad  \quad  \quad \left. X_S^{ - {\alpha _S}} > \frac{{{{\bar Q}_M}}}{{{{\bar Q}_S}}}X_M^{ - {\alpha _M}}\right).
\end{eqnarray}

Since, we are assuming that $\frac{{{{\bar Q}_S}}}{{{{\bar Q}_M}}} > \frac{{{{\bar P}_S}}}{{{{\bar P}_M}}}$, the joint probability of  (\ref{eq:pr_association_case4}) reduces to the following form

\begin{eqnarray}
\label{eq:pr_association_case4_1}
\Pr ({\rm{Case}}4)&=&\Pr \left(X_S^{ - {\alpha _S}} \geq \frac{{{{\bar P}_M}}}{{{{\bar P}_S}}}X_M^{ - {\alpha _M}}\right).
\end{eqnarray}

\begin{lemma}
\label{lemma_3}
The joint probability of association of a typical UE for the association case 4 in closed form can be formulated as

\begin{eqnarray}
\label{eq:pr_association_case1_H}
\Pr ({\rm{Case}}4) &=& {\frac{1}{2}}{\frac{\alpha_M}{\alpha_S}}{\rm{H}}_{1,1}^{1,1}\Biggl[ {{z_{2}}\Biggl| {\begin{array}{*{20}{c}}
{(0,\frac{1}{2})}\\
{(0,\frac{1}{2}\frac{{{\alpha _M}}}{{{\alpha _S}}})}
\end{array}} \Biggr.} \Biggr].
\end{eqnarray}
\end{lemma}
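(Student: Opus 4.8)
\textbf{Proof proposal for Lemma \ref{lemma_3}.}

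The plan is to mirror the argument already used for Lemma \ref{lemma:1}, since $\Pr({\rm Case}4)$ in (\ref{eq:pr_association_case4_1}) is the complementary event to the one analyzed in Lemma \ref{lemma:1} once the parameter $z_1$ is replaced by $z_2$. Concretely, starting from
$\Pr({\rm Case}4) = \Pr\!\left(X_S^{-\alpha_S} \ge \frac{\bar P_M}{\bar P_S} X_M^{-\alpha_M}\right)$,
I would first rewrite the inequality in the form $X_M < \left(\frac{\bar P_M}{\bar P_S}\right)^{1/\alpha_M} X_S^{\alpha_S/\alpha_M}$, exactly paralleling the step from (\ref{eq:pr_association_case1}) to (\ref{eq:pr_association_case1_1}) but with $(\bar P_M,\bar P_S)$ in place of $(\bar Q_M,\bar Q_S)$. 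This is where the definition of $z_2$ (with $\bar P$ ratios rather than $\bar Q$ ratios) enters.

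Next I would condition on $X_S = x$ and integrate against the pdf $f_{X_S}$ from (\ref{eq:nearest_distance_BS}): using independence of $\Phi_M$ and $\Phi_S$,
\[
\Pr({\rm Case}4) = \int_0^\infty F_{X_M}\!\left(\left(\tfrac{\bar P_M}{\bar P_S}\right)^{1/\alpha_M} x^{\alpha_S/\alpha_M}\right) f_{X_S}(x)\,dx,
\]
and then substitute the explicit cdf (\ref{eq:nearest_distance_BS_2}) and pdf. The constant term coming from $F_{X_M}(\cdot)=1-\exp(-\pi\lambda_M(\cdot)^2)$ integrates against $f_{X_S}$ to give $1$, and the remaining piece is $-\int_0^\infty \exp\!\big(-\pi\lambda_M (\tfrac{\bar P_M}{\bar P_S})^{2/\alpha_M} x^{2\alpha_S/\alpha_M}\big)\, 2\pi\lambda_S x\,\exp(-\pi\lambda_S x^2)\,dx$. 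Here a change of variable $t = \pi\lambda_S x^2$ turns the integrand into a product of two exponentials, one of which has a non-integer power $t^{\alpha_S/\alpha_M}$ of the variable — this is precisely the object that does \emph{not} have an elementary closed form and must be expressed via a Mellin--Barnes integral. I would then invoke the standard representation of $\exp(-y)$ as $H^{1,0}_{0,1}$ and use the Fox $H$-function integral identity (the integral of a product of a power and an $H$-function is again an $H$-function, cf.\ \cite{mathai2009h,kilbas2004h}) to collapse the integral into $H^{1,1}_{1,1}$ with argument $z_2$ and the parameter pairs $(0,\tfrac12)$ over $(0,\tfrac12\tfrac{\alpha_M}{\alpha_S})$, picking up the prefactor $\tfrac12\tfrac{\alpha_M}{\alpha_S}$ from the Jacobian of the substitution and the Mellin manipulation.

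The one genuine subtlety — and the step I expect to be the main obstacle — is bookkeeping the relationship between this result and Lemma \ref{lemma:1}: Case 1 and Case 4 are \emph{not} literally complementary because Case 2 sits between them, yet the integral structure is the same up to the swap $z_1 \leftrightarrow z_2$. The cleanest route is to observe that $\Pr(X_S^{-\alpha_S}\ge \frac{\bar P_M}{\bar P_S}X_M^{-\alpha_M})$ has exactly the same functional form as $1-\Pr({\rm Case}1)$ but driven by $z_2$ instead of $z_1$; the ``$1-$'' and the ``$\tfrac12\tfrac{\alpha_M}{\alpha_S}\,H^{1,1}_{1,1}[z_1|\cdots]$'' therefore recombine so that $\Pr({\rm Case}4)=\tfrac12\tfrac{\alpha_M}{\alpha_S}\,H^{1,1}_{1,1}[z_2|\cdots]$, consistent with (\ref{eq;pr_association_case2_H}) via $\Pr({\rm Case}1)+\Pr({\rm Case}2)+\Pr({\rm Case}4)=1$ (since $\Pr({\rm Case}3)=0$). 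I would close by noting that convergence of the Mellin--Barnes integral and admissibility of the $H$-function parameters follow from $\alpha_M,\alpha_S>0$ and $\bar P_M>\bar P_S$, so the contour can be chosen in the usual strip; the detailed computation is identical to Appendix \ref{Lemma1_3} and can simply be cited.
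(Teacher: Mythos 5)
Your overall strategy is the paper's own: condition on $X_S$, integrate the resulting exponentials against $f_{X_S}$ from (\ref{eq:nearest_distance_BS}), rewrite each exponential as an $\mathrm{H}^{1,0}_{0,1}$ via eq.~(2.9.4) of \cite{kilbas2004h}, and collapse the integral of the product into $\mathrm{H}^{1,1}_{1,1}[z_2|\cdots]$ by the Mellin-type product identity (Theorem~2.9 there); Appendix~\ref{Lemma1_3} does exactly this for Case~1 and the paper asserts Case~4 follows the same steps. However, your explicit derivation contains a direction error that makes the displayed integral compute the wrong event. From $X_S^{-\alpha_S}\ge \frac{\bar P_M}{\bar P_S}X_M^{-\alpha_M}$ one obtains $X_M \ge \bigl(\frac{\bar P_M}{\bar P_S}\bigr)^{1/\alpha_M}X_S^{\alpha_S/\alpha_M}$, not $X_M < \cdots$ as you wrote: Case~4 is the event that the $Mcell$ BS is \emph{far}. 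Consequently your integral $\int_0^\infty F_{X_M}(\cdot)\,f_{X_S}(x)\,dx$ is precisely $\Pr(Mcell)$ of (\ref{eq:pr_mcell}), i.e.\ $1-\Pr(\mathrm{Case}\,4)$, and carrying your ``$1-\int\exp(\cdots)$'' computation through yields $1-\frac12\frac{\alpha_M}{\alpha_S}\mathrm{H}^{1,1}_{1,1}[z_2|\cdots]$, the complement of the claimed result. The correct setup uses the ccdf of $X_M$: $\Pr(\mathrm{Case}\,4)=\int_0^\infty \exp\bigl(-\pi\lambda_M(\bar P_M/\bar P_S)^{2/\alpha_M}x^{2\alpha_S/\alpha_M}\bigr)\,2\pi\lambda_S x\,e^{-\pi\lambda_S x^2}\,dx$ with no ``$1-$'' term, and this single product of two exponentials is exactly what the H-function identity turns into $\frac12\frac{\alpha_M}{\alpha_S}\mathrm{H}^{1,1}_{1,1}[z_2|\cdots]$.

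Your closing paragraph does rescue the statement: noting that $\Pr(\mathrm{Case}\,4)$ has the same functional form as $1-\Pr(\mathrm{Case}\,1)$ with $z_1$ replaced by $z_2$ (equivalently, that the three nonzero cases of Lemmas~\ref{lemma:1} and~\ref{lemma_2} must sum to one) is a legitimate shortcut that yields the stated formula. But as written the proposal is internally inconsistent --- the middle computation and the final answer disagree by complementation --- so you should either fix the inequality direction in the integral setup or delete that computation and rely solely on the complementation argument. The convergence remark is fine, though $\bar P_M>\bar P_S$ is not actually needed for the H-function representation to be admissible; only $\alpha_M,\alpha_S>0$ matters there.
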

\begin{proof}
Similar to the proof of Lemma \ref{lemma:1} provided in Appendix \ref{Lemma1_3}.
\end{proof}


\subsection{Distance distributions of a typical UE to its Serving BSs}

In this sub-section we derive the distance distributions of a typical UE to its serving BSs for all the three cases discussed in section \ref{subsec:joint_association_probabilities}. It is important to emphasize here that the serving BS may not be the nearest one to UE. 

\begin{lemma}
\label{lemma:4}
The distance distribution of a typical UE to its serving BS for the association case 1 is formulated as

\begin{eqnarray}
\label{eq:distance_case1}
{f_{{X_M}|{\rm{Case}}1}} = \frac{{\left( {\exp \left( { - \pi {\lambda _S}{{\left( {\frac{{{\bar Q_S}}}{{{\bar Q_M}}}} \right)}^{\frac{2}{{{\alpha _S}}}}}{x^{\frac{{2{\alpha _M}}}{{{\alpha _S}}}}}} \right)} \right) \cdot {f_{{X_M}}}}}{{\Pr (\rm{Case}1)}}.
\end{eqnarray}

\end{lemma}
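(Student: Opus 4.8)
The plan is to obtain \eqref{eq:distance_case1} directly from the definition of a conditional density: writing $p(x)$ for the (sub-probability) density of $X_M$ on the event Case~1, we have $f_{X_M\mid\mathrm{Case}1}(x)=p(x)/\Pr(\mathrm{Case}1)$, so the task reduces to identifying $p(x)$ and then invoking the independence of the two point processes. First I would note that, under the received-power association rule \eqref{eq:ul_association}, the UE always selects the \emph{nearest} BS within whichever tier serves it; hence on the event Case~1 the relevant uplink/downlink distance is exactly $X_M$, which justifies treating $f_{X_M\mid\mathrm{Case}1}$ as the quantity of interest even though the serving BS need not be the globally nearest one.

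Next I would rewrite the defining inequality of Case~1 from \eqref{eq:pr_association_case1_1}, namely $X_M < (\bar{Q}_M/\bar{Q}_S)^{1/\alpha_M} X_S^{\alpha_S/\alpha_M}$, as a lower bound on $X_S$ when $X_M=x$ is fixed:
\[
X_S \;>\; \left(\frac{\bar{Q}_S}{\bar{Q}_M}\right)^{1/\alpha_S} x^{\alpha_M/\alpha_S}.
\]
Since $\Phi_M$ and $\Phi_S$ are independent homogeneous PPPs, $X_M$ and $X_S$ are independent, so conditioning on $X_M=x$ and using the complementary CDF $\Pr(X_S>t)=\exp(-\pi\lambda_S t^2)$ obtained from \eqref{eq:nearest_distance_BS_2} gives
\[
\begin{aligned}
p(x) &= f_{X_M}(x)\,\Pr\!\left(X_S>\left(\tfrac{\bar{Q}_S}{\bar{Q}_M}\right)^{1/\alpha_S}x^{\alpha_M/\alpha_S}\right)\\
&= f_{X_M}(x)\,\exp\!\left(-\pi\lambda_S\left(\tfrac{\bar{Q}_S}{\bar{Q}_M}\right)^{2/\alpha_S} x^{2\alpha_M/\alpha_S}\right).
\end{aligned}
\]
Dividing by $\Pr(\mathrm{Case}1)$ yields exactly \eqref{eq:distance_case1}.

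Finally, as a consistency check I would verify that the right-hand side of \eqref{eq:distance_case1} integrates to $1$: the integral of $p(x)$ over $x\ge 0$ is precisely the integral representation of $\Pr(\mathrm{Case}1)$ implied by \eqref{eq:pr_association_case1_1} (whose closed form appears in Lemma~\ref{lemma:1}), so the normalization is automatic. I do not anticipate any genuine obstacle here; the only points needing care are the bookkeeping of exponents when inverting the inequality between $X_M$ and $X_S$ (the exponent becomes $\alpha_M/\alpha_S$, and squaring it inside the void probability produces the $x^{2\alpha_M/\alpha_S}$ and $(\bar{Q}_S/\bar{Q}_M)^{2/\alpha_S}$ factors), and the conceptual remark that ``nearest within the serving tier'' is what makes $X_M$ the correct distance variable.
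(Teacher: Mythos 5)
Your proof is correct and follows essentially the same route as the paper: both arguments fix $X_M=x$, invert the Case~1 inequality to the event $X_S>(\bar{Q}_S/\bar{Q}_M)^{1/\alpha_S}x^{\alpha_M/\alpha_S}$, apply the independence of the two PPPs together with the void probability $\Pr(X_S>t)=\exp(-\pi\lambda_S t^2)$, and normalize by $\Pr(\mathrm{Case}1)$. The only cosmetic difference is that the paper first writes the complementary CDF $F^c_{X_M\mid\mathrm{Case}1}(x)$ as an integral of exactly your integrand over $[x,\infty)$ and then differentiates, whereas you work directly at the density level --- the two computations are equivalent.
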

\begin{proof}
The development and proof are shown in Appendix \ref{Lemma4_6}
\end{proof}

\begin{lemma}

\begin{floatEq}
\begin{subequations}
\begin{eqnarray}
\label{eq:distance_case2_a}
{f_{{X_M}|{\rm{Case}}2}} &=& \frac{{\left( {\exp \left( { - \pi {\lambda _S}{{\left( {\frac{{{\bar P_S}}}{{{\bar P_M}}}} \right)}^{\frac{2}{{{\alpha _S}}}}}{x^{\frac{{2{\alpha _M}}}{{{\alpha _S}}}}}} \right) - \exp \left( { - \pi {\lambda _S}{{\left( {\frac{{{\bar Q_S}}}{{{\bar Q_M}}}} \right)}^{\frac{2}{{{\alpha _S}}}}}{x^{\frac{{2{\alpha _M}}}{{{\alpha _S}}}}}} \right)} \right) \cdot {f_{{X_M}}}}}{{\Pr (\rm{Case}2)}}\\
\label{eq:distance_case2_b}
{f_{{X_S}|{\rm{Case}}2}} &=& \frac{{\left( {\exp \left( { - \pi {\lambda _M}{{\left( {\frac{{{\bar P_M}}}{{{\bar P_S}}}} \right)}^{\frac{2}{{{\alpha _M}}}}}{x^{\frac{{2{\alpha _S}}}{{{\alpha _M}}}}}} \right) - \exp \left( { - \pi {\lambda _M}{{\left( {\frac{{{\bar Q_M}}}{{{\bar Q_S}}}} \right)}^{\frac{2}{{{\alpha _M}}}}}{x^{\frac{{2{\alpha _S}}}{{{\alpha _M}}}}}} \right)} \right) \cdot {f_{{X_S}}}}}{{\Pr (\rm{Case}2)}}
\end{eqnarray}
\end{subequations}
\end{floatEq}

\label{lemma:5}
The distance distributions of a typical UE to its serving BSs for the association case 2 is formulated as given in  (\ref{eq:distance_case2_a}) and (\ref{eq:distance_case2_b}).

\end{lemma}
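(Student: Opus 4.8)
The plan is to derive the two conditional densities in (\ref{eq:distance_case2_a}) and (\ref{eq:distance_case2_b}) by the same conditioning argument used for Lemma \ref{lemma:4} in Appendix \ref{Lemma1_3}/\ref{Lemma4_6}, namely: write the conditional density of the relevant nearest-BS distance given that Case 2 occurs as the joint density of that distance and the Case-2 event, divided by $\Pr(\mathrm{Case}2)$. Concretely, for (\ref{eq:distance_case2_a}) I would start from
\begin{eqnarray}
f_{X_M\,|\,\mathrm{Case}2}(x) &=& \frac{1}{\Pr(\mathrm{Case}2)}\,f_{X_M}(x)\,\Pr\!\left(\mathrm{Case}2 \,\middle|\, X_M = x\right),\nonumber
\end{eqnarray}
and then evaluate $\Pr(\mathrm{Case}2\mid X_M=x)$ using the reduced description of Case 2 from (\ref{eq:pr_associaion_case2_1}): conditioned on $X_M=x$, Case 2 is the event
$\frac{\bar P_S}{\bar P_M}X_S^{-\alpha_S} < x^{-\alpha_M} \le \frac{\bar Q_S}{\bar Q_M}X_S^{-\alpha_S}$, which is a two-sided constraint on $X_S$. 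Solving for $X_S$ gives $\left(\frac{\bar Q_S}{\bar Q_M}\right)^{1/\alpha_S}x^{\alpha_M/\alpha_S} < X_S \le \left(\frac{\bar P_S}{\bar P_M}\right)^{1/\alpha_S}x^{\alpha_M/\alpha_S}$ (the direction of the inequalities and which bound is larger follows from $\bar Q_S/\bar Q_M > \bar P_S/\bar P_M$ and $\bar P_M>\bar P_S$, already assumed). Since $X_S$ is independent of $X_M$ with cdf (\ref{eq:nearest_distance_BS_2}), this probability equals $F_{X_S}\!\big(\text{upper}\big) - F_{X_S}\!\big(\text{lower}\big) = \exp\!\big(-\pi\lambda_S (\bar P_S/\bar P_M)^{2/\alpha_S} x^{2\alpha_M/\alpha_S}\big) - \exp\!\big(-\pi\lambda_S (\bar Q_S/\bar Q_M)^{2/\alpha_S} x^{2\alpha_M/\alpha_S}\big)$, which is exactly the bracketed factor in (\ref{eq:distance_case2_a}). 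Multiplying by $f_{X_M}(x)$ and dividing by $\Pr(\mathrm{Case}2)$ gives the claim.

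For (\ref{eq:distance_case2_b}) I would run the symmetric argument conditioning on $X_S=x$: here the serving BS in Case 2 for the uplink is the $Scell$ BS, so we want the distribution of $X_S$ given Case 2. Conditioned on $X_S=x$, Case 2 becomes a two-sided constraint on $X_M$, namely $\frac{\bar P_M}{\bar P_S}X_M^{-\alpha_M} > x^{-\alpha_S}$ and $\frac{\bar Q_M}{\bar Q_S}X_M^{-\alpha_M} \le x^{-\alpha_S}$ (rearranging the two inequalities of (\ref{eq:pr_association_case2}) to isolate $X_M$), i.e.\ $\left(\frac{\bar Q_M}{\bar Q_S}\right)^{1/\alpha_M}x^{\alpha_S/\alpha_M} \le X_M < \left(\frac{\bar P_M}{\bar P_S}\right)^{1/\alpha_M}x^{\alpha_S/\alpha_M}$. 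Using independence and (\ref{eq:nearest_distance_BS_2}) again, $\Pr(\mathrm{Case}2\mid X_S=x) = \exp\!\big(-\pi\lambda_M (\bar P_M/\bar P_S)^{2/\alpha_M} x^{2\alpha_S/\alpha_M}\big) - \exp\!\big(-\pi\lambda_M (\bar Q_M/\bar Q_S)^{2/\alpha_M} x^{2\alpha_S/\alpha_M}\big)$, matching the bracketed factor of (\ref{eq:distance_case2_b}); multiply by $f_{X_S}(x)$ and normalize by $\Pr(\mathrm{Case}2)$.

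The only subtle points — and hence where I would be most careful rather than where the real difficulty lies — are (i) checking that the chain of inequality manipulations preserves the correct direction when raising to the power $1/\alpha_k$ and when inverting $t\mapsto t^{-\alpha}$ (monotone decreasing), and (ii) confirming via the standing power-level assumptions ($\bar P_M>\bar P_S$, $\bar Q_M\ge\bar Q_S$, and $\bar Q_S/\bar Q_M > \bar P_S/\bar P_M$) that in each case the "upper" bound genuinely exceeds the "lower" bound so that the interval, and thus the difference of exponentials, is nonnegative — i.e.\ the conditional density is well defined and positive on the appropriate support. The computation that $\Pr(\text{event}\mid X_k=x)$ is a difference of two $F_{X_j}$ values is immediate from independence of $\Phi_M$ and $\Phi_S$ and the closed-form cdf in (\ref{eq:nearest_distance_BS_2}), so no new analytical machinery (no Fox's H-function, no Mellin–Barnes integral) is needed here; this lemma is a direct corollary of the Case 2 association analysis and the null-probability formula, and the proof is essentially bookkeeping. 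I would present it as "analogous to the proof of Lemma \ref{lemma:4}" with the two conditional probabilities spelled out as above.
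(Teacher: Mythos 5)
Your overall route is sound and is essentially the route the paper takes: the paper forms the complementary CDF $\Pr(X_k>x,\ \mathrm{Case}\,2)/\Pr(\mathrm{Case}\,2)$ by integrating $\Pr(\mathrm{Case}\,2\mid X_k=x_k)\,f_{X_k}(x_k)$ over $(x,\infty)$ and then differentiates, while you write the density directly as $f_{X_k}(x)\Pr(\mathrm{Case}\,2\mid X_k=x)/\Pr(\mathrm{Case}\,2)$; these are the same argument, and the key ingredients (independence of $\Phi_M$ and $\Phi_S$, the null-probability CDF, no H-function machinery) are correctly identified.

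However, the bookkeeping goes wrong in exactly the step you flag as the one to be careful about. For the $X_M$ density, conditioning on $X_M=x$ and inverting $t\mapsto t^{-\alpha_S}$ gives
\begin{equation*}
\left(\tfrac{\bar P_S}{\bar P_M}\right)^{1/\alpha_S}x^{\alpha_M/\alpha_S}\;<\;X_S\;\le\;\left(\tfrac{\bar Q_S}{\bar Q_M}\right)^{1/\alpha_S}x^{\alpha_M/\alpha_S},
\end{equation*}
i.e.\ the $P$-ratio is the lower endpoint and the $Q$-ratio the upper one (consistent with $\bar Q_S/\bar Q_M>\bar P_S/\bar P_M$); the interval you wrote has the endpoints exchanged and is empty under the standing assumption. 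You then also evaluate $F_{X_S}(\text{upper})-F_{X_S}(\text{lower})$ with the wrong sign (it equals $e^{-\pi\lambda_S(\text{lower})^2}-e^{-\pi\lambda_S(\text{upper})^2}$), so the two slips cancel and you land on the stated bracket. For the $X_S$ density you state the interval for $X_M$ correctly, but the same CDF-difference slip then yields $e^{-\pi\lambda_M(\bar P_M/\bar P_S)^{2/\alpha_M}x^{2\alpha_S/\alpha_M}}-e^{-\pi\lambda_M(\bar Q_M/\bar Q_S)^{2/\alpha_M}x^{2\alpha_S/\alpha_M}}$; carried out carefully, $F_{X_M}(\text{upper})-F_{X_M}(\text{lower})$ produces these two exponentials in the opposite order, which is the nonnegative quantity (the bracket as printed in the lemma is $\le 0$ for $x>0$ because $\bar P_M/\bar P_S>\bar Q_M/\bar Q_S$). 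In other words, a clean execution of your own method exposes that the second formula of the lemma has its exponentials transposed, and your sign slip happens to reproduce the printed version rather than the correct one. Redo the two CDF differences with the endpoints ordered according to $\bar Q_S/\bar Q_M>\bar P_S/\bar P_M$, and also verify positivity of each bracket as you proposed in point (ii); with that correction the proof is complete and matches the paper's.
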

\begin{proof}
Similar to the proof of Lemma \ref{lemma:4} provided in Appendix \ref{Lemma4_6}.
\end{proof}

\begin{lemma}
\label{lemma:6}
The distance distribution of a typical UE to its serving BS for the association case 4 is formulated as

\begin{eqnarray}
\label{eq:distance_case4}
{f_{{X_S}|{\rm{Case}}4}} = \frac{{\left( {\exp \left( { - \pi {\lambda _M}{{\left( {\frac{{{\bar P_M}}}{{{\bar P_S}}}} \right)}^{\frac{2}{{{\alpha _M}}}}}{x^{\frac{{2{\alpha _S}}}{{{\alpha _M}}}}}} \right)} \right) \cdot {f_{{X_S}}}}}{{\Pr (\rm{Case}4)}}.
\end{eqnarray}

\end{lemma}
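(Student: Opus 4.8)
The plan is to read off $f_{X_S\mid\mathrm{Case}4}$ directly from the definition of a conditional density, mirroring the argument behind Lemma~\ref{lemma:4}. One writes
\[
f_{X_S\mid\mathrm{Case}4}(x) = \frac{f_{X_S}(x)\,\Pr(\mathrm{Case}4\mid X_S=x)}{\Pr(\mathrm{Case}4)} ,
\]
which is legitimate because, within the $Scell$ tier, the received power $\bar P_S\|x\|^{-\alpha_S}$ (and likewise $\bar Q_S\|x\|^{-\alpha_S}$) is strictly decreasing in the distance, so the serving $Scell$ BS in both uplink and downlink is the \emph{nearest} $Scell$ BS, at distance $X_S$. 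Hence the numerator is precisely the ``density'' of the event $\{X_S=x\}\cap\{\mathrm{Case}4\}$.

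Next I would invoke the independence of the two PPPs $\Phi_M$ and $\Phi_S$ (together with Slivnyak's theorem, as already used in Section~\ref{system_model}): conditioning on $X_S=x$ leaves the law of $X_M$ unchanged, namely (\ref{eq:nearest_distance_BS})--(\ref{eq:nearest_distance_BS_2}). Recall from (\ref{eq:pr_association_case4_1}) that, after the reduction using $\bar Q_S/\bar Q_M>\bar P_S/\bar P_M$ (so that the downlink inequality is the binding one), $\mathrm{Case}4$ is the event $X_S^{-\alpha_S}\ge(\bar P_M/\bar P_S)X_M^{-\alpha_M}$, i.e.\ $X_M\ge(\bar P_M/\bar P_S)^{1/\alpha_M}X_S^{\alpha_S/\alpha_M}$. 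Therefore, for $x\ge0$,
\begin{align*}
\Pr(\mathrm{Case}4\mid X_S=x) &= 1-F_{X_M}\!\left((\bar P_M/\bar P_S)^{1/\alpha_M}x^{\alpha_S/\alpha_M}\right) \\
&= \exp\!\left(-\pi\lambda_M(\bar P_M/\bar P_S)^{2/\alpha_M}x^{2\alpha_S/\alpha_M}\right) ,
\end{align*}
the last step being (\ref{eq:nearest_distance_BS_2}). Substituting this into the first display yields (\ref{eq:distance_case4}).

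There is no genuinely hard step here. The only point requiring a word of care is the one flagged in the paragraph preceding the lemma: in general the serving BS need not be the \emph{globally} nearest BS. For $\mathrm{Case}4$ this causes no trouble, since the serving BS is the nearest BS \emph{of the $Scell$ tier}, so $X_S$ is indeed the correct distance variable; the only nontrivial constraint is the cross-tier inequality against the $Mcell$ tier, and that is exactly what produces the truncating exponential factor above. It is also worth noting that $\Pr(\mathrm{Case}4)>0$ by Lemma~\ref{lemma_3}, so the division is well defined, and that the resulting $f_{X_S\mid\mathrm{Case}4}$ integrates to one by construction.
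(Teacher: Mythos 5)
Your proposal is correct and follows essentially the same route as the paper's proof of Lemma~\ref{lemma:4} (which the paper says to mirror for Lemma~\ref{lemma:6}): both reduce Case~4 to the single inequality $X_M \ge (\bar P_M/\bar P_S)^{1/\alpha_M}X_S^{\alpha_S/\alpha_M}$ and use the independence of the two PPPs together with the CCDF $\exp(-\pi\lambda_M y^2)$ of $X_M$; the paper packages this as the complementary CDF of the conditional distance followed by differentiation, while you write the conditional density directly, which is an equivalent computation.
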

\begin{proof}
Similar to the proof of Lemma \ref{lemma:4} provided in Appendix \ref{Lemma4_6}.
\end{proof}

\subsection{Spectral Efficiency}

In this sub-section we derive analytical expressions of spectral efficiency for each association case separately. Results from Lemma  \ref{lemma:1} to \ref{lemma:6} are used to achieve this task.

The average system spectral efficiency can be formulated as 

\begin{eqnarray}
\label{eq:average_spectral_efficiency}
{\rm{SE}}&=&\sum\limits_{i = 1}^4 {{\rm{SE}}({\rm{Case }}\,i)\Pr({\rm{Case }}\,i)} \cr
&=&\sum\limits_{i = 1}^4 {\frac{{{\tau _i}}}{W_k}\Pr({\rm{Case }}\,i))},
\end{eqnarray}
where $\tau_i$ is the average UE rate of association case $i$ and $W_k$ is the system bandwidth, here $k \in\{M,S\}$. First the average UE rate for each association case is derived separately, then the average system spectral efficiency is derived.

\begin{theorem}
\label{Th:Avg_user_rate_d}

Based on the distance distributions derived in Lemma \ref{lemma:4}, \ref{lemma:5}, \ref{lemma:6}, the average user rate for each  association case $i$ can be formulated as given in equations (\ref{eq:average_user_rate_case1_UL}) to (\ref{eq:average_user_rate_case4_DL}). The definitions of the variables therein are given in Table \ref{Table:definitions} and the structure of the bivariate Fox's H-function is defined in Appendix \ref{Appendix:definitions}.

\begin{eqnarray}
\label{eq:average_user_rate_case1_UL}
{\tau _{{\rm{UL}}{\rm{,Case1}}}} &=& \frac{{W_M2\pi {\lambda _M}{\beta _1}{\beta _2}{\beta _3}}}{{\Pr ({\rm{Case1}})}} \cr 
&& \cdot\int\limits_{t > 0} {\xi _{{\rm{UL}}{\rm{,}}M}^{ - 2}}  { {\rm{\hat H}}\left( {1{\rm{;}}\frac{{{\xi _6}}}{{{\xi _{{\rm{UL}}{\rm{,}}M}}}}{\rm{,}}\frac{{{\xi _5}}}{{{\xi _{{\rm{UL}}{\rm{,}}M}}}}} \right)} dt.
\end{eqnarray}
\begin{eqnarray}
\label{eq:average_user_rate_case1_DL}
{\tau _{{\rm{DL}}{\rm{,Case1}}}} &=& \frac{{W_M2\pi {\lambda _M}{\beta _1}{\beta _2}{\beta _3}}}{{\Pr ({\rm{Case1}})}} \cr 
&&\cdot \int\limits_{t > 0} {\xi _{{\rm{DL}}{\rm{,}}M}^{ - 2}}  { {\rm{\hat H}}\left( {1{\rm{;}}\frac{{{\xi _6}}}{{{\xi _{{\rm{DL}}{\rm{,}}S}}}}{\rm{,}}\frac{{{\xi _5}}}{{{\xi _{{\rm{DL}}{\rm{,}}S}}}}} \right)} dt.
\end{eqnarray}


\begin{floatEq}
\begin{eqnarray}
\label{eq:average_user_rate_case2_UL}
{\tau _{{\rm{UL}}{\rm{,Case2}}}} = \frac{{W_S2\pi {\lambda _S}{\beta _4}{\beta _5}{\beta _6}}}{{\Pr ({\rm{Case2}})}}\int\limits_{t > 0} {\xi _{{\rm{UL}}{\rm{,}}S}^{ - 2}} \left( {{\rm{\hat H}}\left( {4{\rm{;}}\frac{{{\xi _1}}}{{{\xi _{{\rm{UL}}{\rm{,}}S}}}}{\rm{,}}\frac{{{\xi _2}}}{{{\xi _{{\rm{UL}}{\rm{,}}S}}}}} \right) - {\rm{\hat H}}\left( {4{\rm{;}}\frac{{{\xi _3}}}{{{\xi _{{\rm{UL}}{\rm{,}}S}}}}{\rm{,}}\frac{{{\xi _2}}}{{{\xi _{{\rm{UL}}{\rm{,}}S}}}}} \right)} \right)dt.
\end{eqnarray}
\end{floatEq}


\begin{floatEq}
\begin{eqnarray}
\label{eq:average_user_rate_case2_DL}
{\tau _{{\rm{DL}}{\rm{,Case2}}}} = \frac{{W_M2\pi {\lambda _M}{\beta _1}{\beta _2}{\beta _3}}}{{\Pr ({\rm{Case2}})}}\int\limits_{t > 0} {\xi _{{\rm{DL}}{\rm{,}}M}^{ - 2}} \left( {{\rm{\hat H}}\left( {1{\rm{;}}\frac{{{\xi _4}}}{{{\xi _{{\rm{DL}}{\rm{,}}M}}}}{\rm{,}}\frac{{{\xi _5}}}{{{\xi _{{\rm{DL}}{\rm{,}}M}}}}} \right) - {\rm{\hat H}}\left( {1{\rm{;}}\frac{{{\xi _6}}}{{{\xi _{{\rm{DL}}{\rm{,}}M}}}}{\rm{,}}\frac{{{\xi _5}}}{{{\xi _{{\rm{DL}}{\rm{,}}M}}}}} \right)} \right)dt.
\end{eqnarray}
\end{floatEq}



\begin{eqnarray}
\label{eq:average_user_rate_case4_UL}
{\tau _{{\rm{UL}}{\rm{,Case4}}}} &=& \frac{{W_S2\pi {\lambda _S}{\beta _{4}}{\beta _{5}}{\beta _{6}}}}{{\Pr ({\rm{Case4}})}} \cr 
&& \cdot \int\limits_{t > 0} {\xi _{{\rm{UL}}{\rm{,}}S}^{ - 2}}  { {\rm{\hat H}}\left( {4{\rm{;}}\frac{{{\xi _3}}}{{{\xi _{{\rm{UL}}{\rm{,}}S}}}}{\rm{,}}\frac{{{\xi _2}}}{{{\xi _{{\rm{UL}}{\rm{,}}S}}}}} \right)} dt.
\end{eqnarray}


\begin{eqnarray}
\label{eq:average_user_rate_case4_DL}
{\tau _{{\rm{DL}}{\rm{,Case4}}}} &=& \frac{{W_S2\pi {\lambda _S}{\beta _{4}}{\beta _{5}}{\beta _{6}}}}{{\Pr ({\rm{Case4}})}} \cr 
&& \cdot \int\limits_{t > 0} {\xi _{{\rm{DL}}{\rm{,}}S}^{ - 2}}  { {\rm{\hat H}}\left( {4{\rm{;}}\frac{{{\xi _3}}}{{{\xi _{{\rm{DL}}{\rm{,}}S}}}}{\rm{,}}\frac{{{\xi _2}}}{{{\xi _{{\rm{DL}}{\rm{,}}S}}}}} \right)} dt.
\end{eqnarray}

\end{theorem}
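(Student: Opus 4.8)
The plan is to obtain each average user rate, written generically as $\tau = W_k\,\mathbb{E}\!\left[\log_2\!\left(1+\Gamma\right)\right]$, where $\Gamma$ is the SNR (for an $Scell$ link) or SINR (for an $Mcell$ link) from (\ref{eq:snirs}) and the expectation is over the fading $h$, the serving distance $X_k$ conditioned on the association case, and — on $Mcell$ links only, since mmW is noise-limited — the aggregate co-tier interference. The first step is to condition on $X_k=x$ and evaluate the inner expectation $\mathbb{E}_{h,I}[\log_2(1+\Gamma(x))]$. On $Scell$ links this is a pure Rayleigh-fading average of a logarithm, which I would put in Fox H form by starting from the Mellin--Barnes representation $\ln(1+y)=\mathrm{H}^{1,2}_{2,2}\!\left[y\,\middle|\,{(1,1),(1,1)\atop(1,1),(0,1)}\right]$, averaging the resulting H-function against the exponential density of $h$ (which is itself $\mathrm{H}^{1,0}_{0,1}$), and using the Mellin-transform-of-a-product identity so that $\mathbb{E}_h[\log_2(1+\gamma h)]$ becomes an H-function in $1/\gamma$. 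On $Mcell$ links I would additionally insert the Laplace functional of the interfering point process, evaluated over the exclusion region dictated by the association event, before carrying out the fading average; this is where a second argument enters and the inner expectation becomes a \emph{bivariate} H-function $\hat{\mathrm{H}}$.

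The second step is the integration over the serving distance. I would substitute the conditional densities from Lemmas \ref{lemma:4}, \ref{lemma:5} and \ref{lemma:6}, which each have the shape $\bigl(\text{Gaussian-type exponential(s) in }x^{2\alpha_k/\alpha_{k'}}\bigr)\cdot f_{X_k}(x)/\Pr(\mathrm{Case}\,i)$, represent those exponentials by their $\mathrm{H}^{1,0}_{0,1}$ forms, and apply the master formula for the Mellin transform of a product of H-functions, $\int_0^\infty x^{\rho-1}\,\mathrm{H}[\sigma x^{c_1}]\,\mathrm{H}[\omega x^{c_2}]\,dx$, to close the $x$-integral. This produces the prefactors $2\pi\lambda_M\beta_1\beta_2\beta_3$ and $2\pi\lambda_S\beta_4\beta_5\beta_6$, the factor $\xi_{\mathrm{X},k}^{-2}$, and the scaled arguments $\xi_j/\xi_{\mathrm{X},k}$ appearing in (\ref{eq:average_user_rate_case1_UL})--(\ref{eq:average_user_rate_case4_DL}); the constants $\beta_1,\dots,\beta_6$ and the scales $\xi_j$, $\xi_{\mathrm{X},k}$ simply collect the powers of $\pi\lambda_k$, $\bar P_k$, $\bar Q_k$ and $\sigma_k^2$ generated along the way and are recorded in Table \ref{Table:definitions}. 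The one integration that does not close in this way — the Frullani-type auxiliary variable used to linearize the logarithm, equivalently the last scalar integral left in the noise-limited case — is exactly the remaining $\int_{t>0}(\cdots)\,dt$ in the statement.

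The third step is bookkeeping across the four cases, run in parallel. Cases 1 and 4 each have a single exponential factor in their conditional density (Lemmas \ref{lemma:4}, \ref{lemma:6}), so each direction yields a single $\hat{\mathrm{H}}$ term; Case 2 has a \emph{difference} of two Gaussian-type exponentials (Lemma \ref{lemma:5}), which propagates linearly through all of the above to give the differences of H-functions in (\ref{eq:average_user_rate_case2_UL}) and (\ref{eq:average_user_rate_case2_DL}); Case 3 is vacuous because $\Pr(\mathrm{Case}3)=0$. The division by $\Pr(\mathrm{Case}\,i)$ is inherited verbatim from the conditional densities, and $W_k$ is the bandwidth of the serving tier, $k\in\{M,S\}$.

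I expect the main obstacle to be the sub-6GHz ($Mcell$) interference term: one must compute the Laplace transform of the aggregate interference under the association-dependent exclusion region, express the resulting incomplete-Beta / Gauss-hypergeometric-type function as a Fox H-function with the correct parameter lists, and then verify that the combined Mellin--Barnes integrand meets the contour-placement and convergence conditions recalled in Appendix \ref{Appendix:definitions}, so that the product (interference transform $\times$ fading exponential $\times$ distance-density exponential), once integrated over $x$, is a genuinely well-defined bivariate H-function $\hat{\mathrm{H}}$. The remainder — matching each power of $\lambda_k$, $\bar P_k$, $\bar Q_k$ and $\sigma_k^2$ to the tabulated $\beta_j$ and $\xi_j$, and checking the noise-limited $Scell$ sub-cases, which are the same derivation with the interference factor set to one — is routine but lengthy.
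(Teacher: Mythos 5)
Your overall architecture matches the paper's: condition on the serving distance, average over fading (and, for the $Mcell$ link, over the interference via the Laplace functional of the PPP), substitute the conditional densities of Lemmas \ref{lemma:4}--\ref{lemma:6}, rewrite every exponential factor as an $\mathrm{H}^{1,0}_{0,1}$, and close the $x$-integral with the formula for the integral of a product of Fox H-functions (the paper invokes the Mittal--Gupta result for the product of three H-functions), with Case 2 producing differences of terms and Case 3 vacuous. Two points differ from the paper's actual proof, and the first matters for recovering the stated form. First, the paper does not linearize $\ln(1+y)$ through its $\mathrm{H}^{1,2}_{2,2}$ Mellin--Barnes representation; it uses the identity $\mathbb{E}[\ln(1+\Gamma)]=\int_{t>0}\Pr(\Gamma>e^{t}-1)\,dt$ from Andrews \emph{et al.}, so that the exponential CCDF of the Rayleigh power gain immediately yields $\exp(-x^{\alpha_k}(e^{t}-1)/\bar{P}_k)$ and the auxiliary variable $t$ survives as the outer integral --- which is exactly why every $\xi_{\mathrm{UL},k}$ and $\xi_{\mathrm{DL},k}$ in Table \ref{Table:definitions} depends on $(e^{t}-1)^{1/\alpha_k}$. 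Your route (Mellin-averaging the logarithm against the fading density first) is legitimate but would close the fading integral into a different univariate H-function and leave no $t$-integral of the stated shape; your closing remark about the ``Frullani-type auxiliary variable'' being the residual $\int_{t>0}$ identifies the correct mechanism, but it is the CCDF route, not the $\mathrm{H}^{1,2}_{2,2}$ route, that produces it. Second, the bivariate $\hat{\mathrm{H}}$ is not created by the interference term: it arises in every case, including the noise-limited $Scell$ ones, because the $x$-integrand is a product of three $\mathrm{H}^{1,0}_{0,1}$ factors (the fading/noise exponential plus the two Gaussian-type exponentials coming from the conditional density); in the interference-limited $Mcell$ downlink the Laplace transform $\exp(-\pi\lambda_{\mathrm{IU}}x^{2}G(t))$ simply merges with $\exp(-\pi\lambda_M x^{2})$ into the single scale $\xi_5=\sqrt{\pi(\lambda_M+\lambda_{\mathrm{IU}}G(t))}$ rather than contributing an extra H-function argument. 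With these two corrections your plan reproduces the paper's derivation.
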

\begin{proof}
The development and the proof are shown in Appendix \ref{appendix_avg_user_rate}.
\end{proof}



\begin{table*}
\centering
\caption{Definitions of variables/parameters used in the expressions of Fox's H-function.}
\label{Table:definitions}
\begin{tabular}{|l|l|l|l|}
\hline & & &
\\
Variables & Definitions  & Variables & Definitions           \\ & & &\\ \hline \hline & & &
\\
$\xi_{{\rm{UL}},M}$& $\frac{{{{\left( {\exp (t) - 1} \right)}^{\frac{1}{{{\alpha _M}}}}}}}{{{{\bar Q}_M}^{\frac{1}{{{\alpha _M}}}}}}$ & 

$\xi_{{\rm{DL}},M}$& $\frac{{{{\left( {\exp (t) - 1} \right)}^{\frac{1}{{{\alpha _M}}}}}}}{{{{\bar P}_M}^{\frac{1}{{{\alpha _M}}}}}}$         \\ & & &\\ \hline & & &

\\



$\xi_{{\rm{UL}},S}$&  $\frac{{{{\left( {\exp (t) - 1} \right)}^{\frac{1}{{{\alpha _S}}}}}}}{{{{\bar Q}_S}^{\frac{1}{{{\alpha _S}}}}}}$ & 

$\xi_{{\rm{DL}},S}$& $\frac{{{{\left( {\exp (t) - 1} \right)}^{\frac{1}{{{\alpha _S}}}}}}}{{{{\bar P}_S}^{\frac{1}{{{\alpha _S}}}}}}$ \\ & & & \\
\hline & & &
\\

$\xi_1$& ${\left( {\sqrt {\pi {\lambda _M}} } \right)^{\frac{{{\alpha _M}}}{{{\alpha _S}}}}}{\left( {\frac{{{\bar Q_M}}}{{{\bar Q_S}}}} \right)^{\frac{1}{{{\alpha _S}}}}} $ &

$\xi_2$&  $\sqrt {\pi {\lambda _S}} $\\ & & &\\
\hline & & &
\\
$\xi_3$& ${\left( {\sqrt {\pi {\lambda _M}} } \right)^{\frac{{{\alpha _M}}}{{{\alpha _S}}}}}{\left( {\frac{{{\bar P_M}}}{{{\bar P_S}}}} \right)^{\frac{1}{{{\alpha _S}}}}} $ & 

$\xi_4$&${\left( {\sqrt {\pi {\lambda _S}} } \right)^{\frac{{{\alpha _S}}}{{{\alpha _M}}}}}{\left( {\frac{{{\bar P_S}}}{{{\bar P_M}}}} \right)^{\frac{1}{{{\alpha _M}}}}} $ \\& & & \\
\hline & & &
\\
$\xi_5$& $\sqrt {\pi \left( {\lambda _M + \lambda _{\rm{IU}}G(t)}\right) } $ & 
$\xi_6$&${\left( {\sqrt {\pi {\lambda _S}} } \right)^{\frac{{{\alpha _S}}}{{{\alpha _M}}}}}{\left( {\frac{{{\bar Q_S}}}{{{\bar Q_M}}}} \right)^{\frac{1}{{{\alpha _M}}}}} $ \\ & & &\\
\hline & & &
\\
$\beta_1$& $\frac{1}{\alpha_M}$ & 
$\beta_2$& $\frac{\alpha_S}{2\alpha_M}$ \\ & & &\\ 
\hline & & &
\\
$\beta_3$& $\frac{1}{2}$ & 
$\beta_4$& $\frac{1}{\alpha_S}$ \\& & & \\ 
\hline & & &
\\
$\beta_5$& $\frac{\alpha_M}{2\alpha_S}$ & 
$\beta_6$& $\beta_3$ \\& & & \\ 
\hline 

\end{tabular}
\end{table*}


\begin{corollary}
Following (\ref{eq:average_spectral_efficiency}), the average uplink and downlink spectral efficiencies for the decoupled access modes can be, respectively, given by

\begin{equation}
{\rm{SE}}_{{\rm{UL}}}^D = \sum\limits_{i = 1}^4 {\frac{{{\tau _{{\rm{UL}}{\rm{,Case}}\,i}}}}{W_k}} \Pr ({\rm{Case}}\,i).
\end{equation}
\begin{equation}
{\rm{SE}}_{{\rm{DL}}}^D = \sum\limits_{i = 1}^4 {\frac{{{\tau _{{\rm{DL}}{\rm{,Case}}\,i}}}}{W_k}} \Pr ({\rm{Case}}\,i).
\end{equation}

\end{corollary}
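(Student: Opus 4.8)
The plan is to obtain the claim as an immediate consequence of the law of total probability applied to the partition of the typical UE's association space into the four cases of Section~\ref{subsec:joint_association_probabilities}. First I would note that Cases~1--4 are mutually exclusive and collectively exhaustive, with $\Pr(\mathrm{Case}\,3)=0$, so that for a typical UE the per-case rates together with the case probabilities completely determine the average rate. From (\ref{eq:average_spectral_efficiency}) the average spectral efficiency is the bandwidth-normalized average rate, and by the law of total expectation this average rate splits as $\sum_{i=1}^{4}\tau_i\,\Pr(\mathrm{Case}\,i)$, where $\tau_i$ is the \emph{conditional} average UE rate in case $i$ supplied in closed form by Theorem~\ref{Th:Avg_user_rate_d}.

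Next I would carry out the decomposition separately for the two link directions. Since the case definitions in (\ref{eq:pr_association_case1})--(\ref{eq:pr_association_case4}) are joint statements about the uplink and downlink serving cells, conditioning on a case fixes both serving tiers; hence the uplink contribution of case $i$ is $\tau_{\mathrm{UL},\mathrm{Case}\,i}\,\Pr(\mathrm{Case}\,i)$ and the downlink contribution is $\tau_{\mathrm{DL},\mathrm{Case}\,i}\,\Pr(\mathrm{Case}\,i)$, with $\tau_{\mathrm{UL},\mathrm{Case}\,i}$ and $\tau_{\mathrm{DL},\mathrm{Case}\,i}$ given by (\ref{eq:average_user_rate_case1_UL})--(\ref{eq:average_user_rate_case4_DL}). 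Dividing by the bandwidth $W_k$ of the tier serving the relevant link ($W_M$ in the $Mcell$-served links, $W_S$ in the $Scell$-served links) and summing yields $\mathrm{SE}_{\mathrm{UL}}^{D}=\sum_{i=1}^{4}(\tau_{\mathrm{UL},\mathrm{Case}\,i}/W_k)\,\Pr(\mathrm{Case}\,i)$ and likewise for the downlink, which is exactly the statement.

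There is essentially no hard step here; the only point to watch is the internal consistency of the bandwidth index $k$, and the fact that the factor $\Pr(\mathrm{Case}\,i)$ in the denominator of each $\tau_i$ (introduced by the normalization of the conditional distance distributions in Lemmas~\ref{lemma:4}--\ref{lemma:6}) cancels against the $\Pr(\mathrm{Case}\,i)$ weight, so that each summand reduces to a well-defined integral expression and the Case~3 term vanishes automatically. Collecting the four terms for each direction completes the proof.
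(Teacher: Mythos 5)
Your proposal is correct and follows essentially the same route as the paper, which presents this corollary as an immediate specialization of the total-expectation decomposition in (\ref{eq:average_spectral_efficiency}) to the uplink and downlink rates separately. Your added observations — that Case~3 contributes nothing and that the $\Pr(\mathrm{Case}\,i)$ normalization inside each $\tau_i$ cancels against the weight — are accurate and consistent with the paper's (implicit) argument.
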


\begin{corollary}
\label{Th:Avg_user_rate_c}

Similar to Theorem \ref{Th:Avg_user_rate_d}, the average user rate for the coupled access mode for the two conventional association cases (i.e., either a UE connects to a $Mcell$ BS or $Scell$ BS ) can be formulated easily following the same steps as adopted in the proof of Theorem \ref{Th:Avg_user_rate_d} in Appendix \ref{appendix_avg_user_rate}. When UEs connect to $Scell$ BSs, in coupled access mode, this association case will be exactly equal to the association case 4 of decoupled access mode. Therefore the average uplink UE rate ${\tau^C _{{\rm{UL}},S}}$ and the  average downlink UE rate ${\tau^C _{{\rm{DL}},S}}$ in coupled access mode, respectively,  are  equal to ${\tau _{{\rm{UL}}{\rm{,Case4}}}}$ and  ${\tau _{{\rm{DL}}{\rm{,Case4}}}}$. When UEs connect to $Mcell$ BSs in coupled access mode, the average user rate of this association case can be formulated by simple mathematical manipulation of expressions of association case 1 and 2, given in Theorem \ref{Th:Avg_user_rate_d}.

\begin{eqnarray}
\label{eq:average_user_rate_M_UL}
{\tau^C _{{\rm{UL}},M}} &=& \frac{{W_M2\pi {\lambda _M}{\beta _1}{\beta _2}{\beta _3}}}{{\Pr (Mcell)}} \cr 
&&\int\limits_{t > 0} {\xi _{{\rm{UL}}{\rm{,}}M}^{ - 2}}{{\rm{\hat H}}\left( {1{\rm{;}}\frac{{{\xi _4}}}{{{\xi _{{\rm{UL}}{\rm{,}}M}}}}{\rm{,}}\frac{{{\xi _5}}}{{{\xi _{{\rm{UL}}{\rm{,}}M}}}}} \right) } dt,
\end{eqnarray}

\begin{eqnarray}
\label{eq:average_user_rate_M_DL}
{\tau^C _{{\rm{DL}},M}} &=& \frac{{W_M2\pi {\lambda _M}{\beta _1}{\beta _2}{\beta _3}}}{{\Pr (Mcell)}} \cr 
&&\int\limits_{t > 0} {\xi _{{\rm{DL}}{\rm{,}}M}^{ - 2}}{{\rm{\hat H}}\left( {1{\rm{;}}\frac{{{\xi _4}}}{{{\xi _{{\rm{DL}}{\rm{,}}M}}}}{\rm{,}}\frac{{{\xi _5}}}{{{\xi _{{\rm{DL}}{\rm{,}}M}}}}} \right) } dt,
\end{eqnarray}


where 
\begin{equation}
\label{eq:pr_mcell}
\begin{split}
\Pr (Mcell) &= \Pr (X_M^{ - {\alpha _M}}{{\bar P}_M} > X_S^{ - {\alpha _S}}{{\bar P}_S}) \\
&= 1 - \frac{{{\alpha _M}}}{{2{\alpha _S}}}H_{1,1}^{1,1}\left[ {{z_2}\left| {\begin{array}{*{20}{c}}
{(0,\frac{1}{2})}\\
{(0,\frac{1}{2}\frac{{{\alpha _M}}}{{{\alpha _S}}})}
\end{array}} \right.} \right]
\end{split}
\end{equation}
is the association probability of a UE to $Mcell$ BS. Rest of the variables used in equations (\ref{eq:average_user_rate_M_UL}) to (\ref{eq:pr_mcell}) are defined in Table \ref{Table:definitions}.
\end{corollary}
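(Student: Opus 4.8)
The plan is to reduce each of the two conventional (coupled) association cases to quantities already evaluated for the decoupled model, so that no new integral has to be computed from scratch. Consider first the $Scell$ case. In coupled mode the typical UE uses a single downlink received-power rule, hence it associates to an $Scell$ BS exactly when $\bar P_S X_S^{-\alpha_S}\ge\bar P_M X_M^{-\alpha_M}$. Under the standing assumption $\bar Q_S/\bar Q_M>\bar P_S/\bar P_M$ this inequality already forces $\bar Q_S X_S^{-\alpha_S}>\bar Q_M X_M^{-\alpha_M}$, so the event coincides with the reduced one in (\ref{eq:pr_association_case4_1}) that defines decoupled Case~4. Since the conditioning set, the serving-distance law $f_{X_S|\mathrm{Case}4}$ of (\ref{eq:distance_case4}), and the relevant SNR expressions are then literally the same in the two modes, the uplink and downlink rate integrals of Theorem~\ref{Th:Avg_user_rate_d} are unchanged, which gives $\tau^{C}_{\mathrm{UL},S}=\tau_{\mathrm{UL},\mathrm{Case}4}$ and $\tau^{C}_{\mathrm{DL},S}=\tau_{\mathrm{DL},\mathrm{Case}4}$, i.e.\ (\ref{eq:average_user_rate_case4_UL})--(\ref{eq:average_user_rate_case4_DL}).

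Next I would treat the $Mcell$ case, starting with the association probability. Coupled association to an $Mcell$ BS is the event $\bar P_M X_M^{-\alpha_M}>\bar P_S X_S^{-\alpha_S}$; in the taxonomy of Section~\ref{subsec:joint_association_probabilities} the downlink BS is an $Mcell$ BS precisely in Cases~1 and 2, which are disjoint, so $\Pr(Mcell)=\Pr(\mathrm{Case}1)+\Pr(\mathrm{Case}2)$. Substituting Lemma~\ref{lemma:1} and Lemma~\ref{lemma_2}, the two $H_{1,1}^{1,1}[z_1\,|\,\cdot\,]$ contributions cancel and what remains is exactly (\ref{eq:pr_mcell}).

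For the distance law and the rate in the $Mcell$ case, I would condition $X_M$ on $\bar P_M X_M^{-\alpha_M}>\bar P_S X_S^{-\alpha_S}$, i.e.\ on $X_S>(\bar P_S/\bar P_M)^{1/\alpha_S}x^{\alpha_M/\alpha_S}$, and apply the null probability of the $Scell$ PPP exactly as in the proof of Lemma~\ref{lemma:4}; this yields $f_{X_M|Mcell}(x)=\exp\!\big(-\pi\lambda_S(\bar P_S/\bar P_M)^{2/\alpha_S}x^{2\alpha_M/\alpha_S}\big)f_{X_M}(x)/\Pr(Mcell)$, which a one-line check identifies with the probability-weighted mixture $\big(f_{X_M|\mathrm{Case}1}\Pr(\mathrm{Case}1)+f_{X_M|\mathrm{Case}2}\Pr(\mathrm{Case}2)\big)/\Pr(Mcell)$, the $\bar Q$-exponentials of (\ref{eq:distance_case1}) and (\ref{eq:distance_case2_a}) telescoping away. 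Since in coupled mode the uplink is served by the same $Mcell$ BS, the uplink and downlink rates follow from $f_{X_M|Mcell}$ through the very derivation of $\tau_{\mathrm{UL},\mathrm{Case}1}$ and $\tau_{\mathrm{DL},\mathrm{Case}1}$ in Appendix~\ref{appendix_avg_user_rate}; the only changes are that the exponential in the distance density now carries $\bar P_S/\bar P_M$ rather than $\bar Q_S/\bar Q_M$ (which turns the bivariate Fox's H-function argument $\xi_6$ into $\xi_4$) and that the normalizing constant is $\Pr(Mcell)$ rather than $\Pr(\mathrm{Case}1)$, producing (\ref{eq:average_user_rate_M_UL}) and (\ref{eq:average_user_rate_M_DL}).

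The step I expect to be the real obstacle is not any of the above bookkeeping but justifying that the aggregate-interference statistics embedded in the bivariate Fox's H-functions --- in particular the effective interferer density $\lambda_M+\lambda_{\mathrm{IU}}G(t)$ entering through $\xi_5$ --- are genuinely identical in the coupled and decoupled models. This has to be invoked from the interference assumptions of Section~\ref{propagation_assumptions} (the active interfering UEs form a mode-independent thinning of $\Phi_u$ and the serving $Mcell$ BS is excluded in the same manner in both models); once that is granted, everything else is the telescoping of exponentials and the relabelling of Fox's H-function arguments described above.
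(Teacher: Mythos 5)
Your proposal is correct and follows essentially the same route as the paper, which only sketches this corollary as ``follow the same steps'' and ``simple mathematical manipulation'' of Cases 1 and 2: you identify the coupled $Scell$ event with decoupled Case 4 via $\bar Q_S/\bar Q_M > \bar P_S/\bar P_M$, sum the Lemma~1 and Lemma~2 expressions so the $H[z_1]$ terms cancel to give $\Pr(Mcell)$, and telescope the $\bar Q$-exponentials in the conditional distance density so that the Case-1 derivation goes through with $\xi_6$ replaced by $\xi_4$ and the normalization replaced by $\Pr(Mcell)$. Your closing caveat about the interference statistics (the $\lambda_M+\lambda_{\mathrm{IU}}G(t)$ term in $\xi_5$) being mode-independent is a detail the paper leaves implicit but does indeed rest on the assumptions of Section~\ref{propagation_assumptions}.
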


\begin{corollary}
Similar to the decoupled access mode, following (\ref{eq:average_spectral_efficiency}),the average uplink and downlink spectral efficiencies for the coupled access modes can be, respectively, given by
\begin{equation}
{\rm{SE}}_{{\rm{UL}}}^C =  \frac{{\tau^C _{{\rm{UL}},M}}}{W_M} \cdot \Pr (Mcell) + \frac{{\tau^C _{{\rm{UL}},S}}}{W_S} \cdot \Pr (Scell),
\end{equation}

\begin{equation}
{\rm{SE}}_{{\rm{DL}}}^C =  \frac{{\tau^C _{{\rm{DL}},M}}}{W_M} \cdot \Pr (Mcell) + \frac{{\tau^C _{{\rm{DL}},S}}}{W_S} \cdot \Pr (Scell),
\end{equation}

where the $\Pr(Scell)=\Pr({\rm{Case}\,4})$.

\end{corollary}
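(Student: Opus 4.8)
The plan is to specialize the total--probability decomposition \eqref{eq:average_spectral_efficiency} to the coupled access mode, in which a typical UE is served by one and the same BS for both links. First I would observe that, under the downlink received--power association rule that governs coupled access, this serving BS lies in the $Mcell$ tier exactly when ${\bar P}_M X_M^{-\alpha_M} > {\bar P}_S X_S^{-\alpha_S}$ and in the $Scell$ tier otherwise, so the sample space splits into the two complementary events ``UE served by $Mcell$'' and ``UE served by $Scell$'', with $\Pr(Mcell)+\Pr(Scell)=1$. The closed form of $\Pr(Mcell)$ is the one already recorded in \eqref{eq:pr_mcell}, obtained from that single downlink inequality by the very same Mellin--Barnes / Fox's H--function computation used to establish Lemma~\ref{lemma_2}; equivalently $\Pr(Mcell)=\Pr(\mathrm{Case}\,1)+\Pr(\mathrm{Case}\,2)$, which is easily checked to give the same expression from Lemmas~\ref{lemma:1} and~\ref{lemma_2}.

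Next I would apply \eqref{eq:average_spectral_efficiency} with the index set restricted to these two events, carrying the uplink and the downlink separately. Since the per--link average UE rate conditioned on the event ``UE served by $kcell$'' is by definition $\tau^C_{\mathrm{UL},k}$ for the uplink and $\tau^C_{\mathrm{DL},k}$ for the downlink, and dividing a rate by the associated system bandwidth $W_k$ converts it into a spectral efficiency, the decomposition yields immediately
\[
{\rm{SE}}_{{\rm{UL}}}^C = \frac{\tau^C_{{\rm{UL}},M}}{W_M} \cdot \Pr(Mcell) + \frac{\tau^C_{{\rm{UL}},S}}{W_S} \cdot \Pr(Scell),
\]
together with the analogous identity for the downlink. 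The $Mcell$ rates $\tau^C_{\mathrm{UL},M}$ and $\tau^C_{\mathrm{DL},M}$ appearing here are precisely those displayed in \eqref{eq:average_user_rate_M_UL} and \eqref{eq:average_user_rate_M_DL} of Corollary~\ref{Th:Avg_user_rate_c}.

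It then remains to pin down the $Scell$ contributions, and this is the only step that needs a line of care---the part I would expect to look like the main obstacle. As noted in Corollary~\ref{Th:Avg_user_rate_c}, the coupled event ``UE served by $Scell$'' coincides with decoupled Case~4: by \eqref{eq:pr_association_case4_1} Case~4 reduces to the downlink condition $X_S^{-\alpha_S} \ge \frac{{\bar P}_M}{{\bar P}_S} X_M^{-\alpha_M}$, which is exactly the coupled $Scell$--association condition, because the standing ordering ${\bar Q}_S/{\bar Q}_M > {\bar P}_S/{\bar P}_M$ makes the uplink inequality automatic once the downlink one holds. Hence $\Pr(Scell)=\Pr(\mathrm{Case}\,4)$, $\tau^C_{\mathrm{UL},S}=\tau_{\mathrm{UL,Case4}}$ and $\tau^C_{\mathrm{DL},S}=\tau_{\mathrm{DL,Case4}}$, which is the identification stated in the corollary. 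Apart from this equivalence check, no genuine analytic difficulty arises: the claim is a direct consequence of the partition property and of \eqref{eq:average_spectral_efficiency}.
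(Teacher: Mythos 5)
Your proposal is correct and follows exactly the route the paper intends: the corollary is stated there without a separate proof, as an immediate specialization of the total-probability decomposition in (\ref{eq:average_spectral_efficiency}) to the two complementary coupled-association events, with the $Scell$ branch identified with Case~4 via (\ref{eq:pr_association_case4_1}) and the $Mcell$ rates taken from Corollary~\ref{Th:Avg_user_rate_c}. Your added consistency checks (that $\Pr(Mcell)=\Pr(\mathrm{Case}\,1)+\Pr(\mathrm{Case}\,2)$ matches (\ref{eq:pr_mcell}), and that the power ordering makes the uplink inequality redundant in the $Scell$ event) are accurate and merely make explicit what the paper leaves implicit.
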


\begin{table}[h]
\centering
\caption{System Parameters}
\label{table:system_paramters}
\begin{tabular}{|l|l|}
\hline
Parameters & Value \\ \hline \hline
$Mcell$ BS transmit power $P_M$ (dBm)           &  46     \\ 
$Scell$ BS transmit power $P_S$ (dBm)           &  20     \\ 
Antenna Gain for $Mcell$ BS $G_M$ (dBi)        &  0      \\ 
Antenna Gain for $Scell$ BS $G_S$ (dBi)        &  18     \\ 
UE's transmit power to $Mcell$ BS $Q_M$ (dBm)   &  20     \\ 
UE's transmit power to $Scell$ BS $Q_S$ (dBm)   &  20     \\ 
Pathloss exponent for sub-6GHz tier $\alpha_M$ &  3     \\ 
Pathloss exponent for millimeter wave tier $\alpha_S$ (LOS) &  2     \\ 
Pathloss exponent for millimeter wave tier $\alpha_S$ (NLOS) &  4     \\ 
Noise power $\sigma^2_S=\sigma^2_M$ (dBm)					   &  0     \\ \hline
\end{tabular}
\end{table}
\section{Numerical and Simulation Results}
\label{sec:num_sim_results}

In this section a comprehensive performance comparison between coupled and decoupled access modes is presented using numerical and simulation results. As presented in section \ref{system_model}, we consider two-tier HetNet, which consists of sub-6Ghz $Mcell$ BSs and millimeter wave $Scell$ BSs are modeled using independent homogeneous PPP. We assume that the tier of $Mcell$ BSs is interference limited whereas the tier of $Scell$ BSs in only noise limited \cite{singh2015tractable,elshaer2016downlink}. The simulation model simply consists of uniformly distributed BSs of both tiers and a UE located in the center of a circular area.
The default system parameters are selected based on the 3GPP specifications {\cite{TR36942}} and existing research work { \cite{smiljkovikj2015analysis,smiljkovikj2015efficiency,elshaer2016downlink}, }their values are listed in Table \ref{table:system_paramters}. {Without any loss of generality the noise power is normalised to 1mW}. To validate the analytical model formulated in this paper, we perform Monte Carlo simulations for all the association cases under consideration.  The simulation results are obtained by averaging over 100,000 independent realizations in MATLAB.  We investigate the potential gain of decoupled access in terms of spectral efficiency and discuss its efficacy from a pragmatic perspective. Moreover, we also discuss the joint association probabilities and distance distributions of a typical UE to its serving BSs for line of sight (LOS) and non line of sight (NLOS) cases. 

\subsection{Joint Association Probabilities}

We first analyze the joint association probabilities of three possible association cases as shown in Fig. \ref{fig:prob_assoc_NLOS} and \ref{fig:prob_assoc_LOS}, for NLOS and LOS scenarios, respectively. {In LOS scenario i.e., when UEs have access to unobstructed link to mmW BSs, UEs choose to connect with $Scell$ BSs in both uplink and downlink with very high probability. The rationale behind this trend is obvious, the antenna gain along with high quality of LOS link of $Scell$ BS becomes an attractive choice with respect to the received power. On the other hand, for NLOS scenario, Fig. \ref{fig:prob_assoc_NLOS} shows that the probability of UEs who choose to decouple are higher only for the lower values of $\lambda_S/\lambda_M$ and it decreases significantly at the cost of increase in association case 4 i.e., when UEs choose to connect with $Scell$ BSs in both uplink and downlink, as we increase the density of $Scell$ BSs. Similarly, in the case when LOS links to $Scell$ BSs are available, which would be the case in less dense urban environments, the probability of UEs who choose to decouple goes to almost zero quite rapidly as we increase the density of BSs $\lambda_S/\lambda_M$ }. 

{
The effect of shadowing is analysed by performing simulations based on the shadowing parameters given in \cite{akdeniz2014millimeter}. In these simulations a lognormal shadowing parameter $\xi  \sim {\mathcal{N}}(0,\sigma _\xi ^2)$ is added in the pathloss. The results demonstrate that the effect of shadowing, as shown in Fig. \ref{fig:prob_assoc_NLOS} and Fig. \ref{fig:prob_assoc_LOS}, does not modify in a measurable way the association curves. Without the loss of generality it can be stated that the effect of shadowing has a minimal impact on the association curves and therefore can be ignored in the rest of the work. 

}
{
Moreover, to understand the impact of pathloss exponents in the association phase, we plotted the probabilities of three possible association cases against different values of the pathloss exponent $\alpha_S$ in Fig. \ref{fig:Pathloss_Exponent_Impact}. It gives us an insight into the effect of BSs' density $\lambda_S/\lambda_M$ on the probability of association case 1 and the impact of $Scell$ BSs link quality on the probability of association case 2 and 4. It is evident from Fig. \ref{fig:Pathloss_Exponent_Impact} that as the link quality of $Scell$ BSs decreases (i.e., higher values of $\alpha_S$), the probability of association case 2 increases at the cost of decrease in association case 4. Furthermore, it also shows that the higher density of $Scell$ BSs makes them  an attractive choice even for the higher values of $\alpha_S$.
}

These results indicate three important things, (i) without any power biasing the number of UEs who chose to decouple are not significant from a system level point of view; (ii) power biasing can certainly be used for load balancing which forces UEs to decouple at the cost of decrease in achievable rate and spectral efficiency; (iii) in a less dense urban environment (i.e., LOS scenario), where load balancing would not be an issue, the feasibility of the implementation of decoupled access is very bleak. 

\begin{figure}[]%
\centering
\subfigure[NLOS]{%
\label{fig:prob_assoc_NLOS}%
\includegraphics[width=\columnwidth]{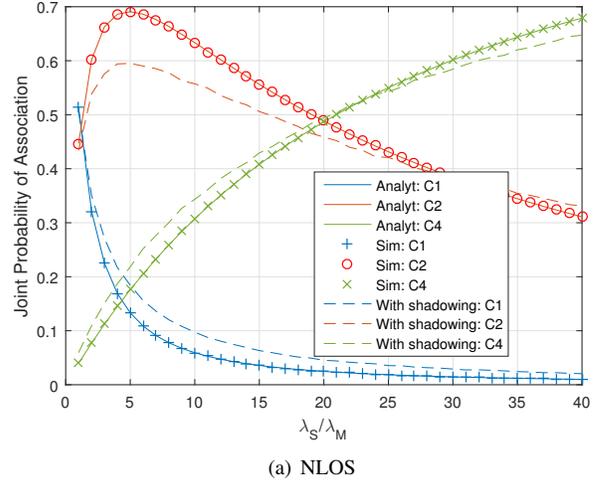}}%
\qquad
\subfigure[LOS]{%
\label{fig:prob_assoc_LOS}%
\includegraphics[width=\columnwidth]{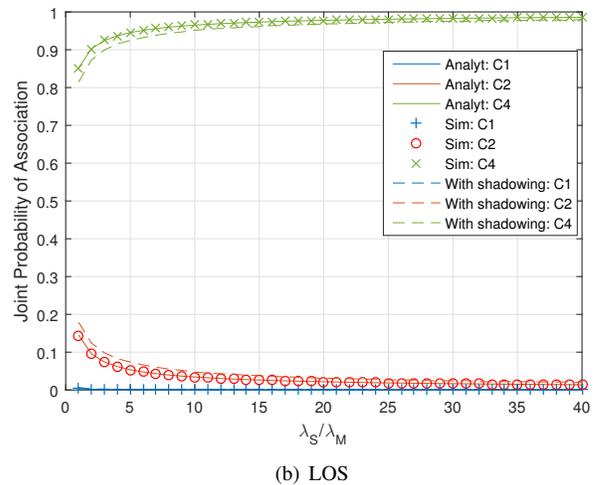}}%
\caption{Joint association probabilities of all three possible association cases.}
\end{figure}

\begin{figure*}[!t]\centering
\includegraphics[width=\columnwidth*2]{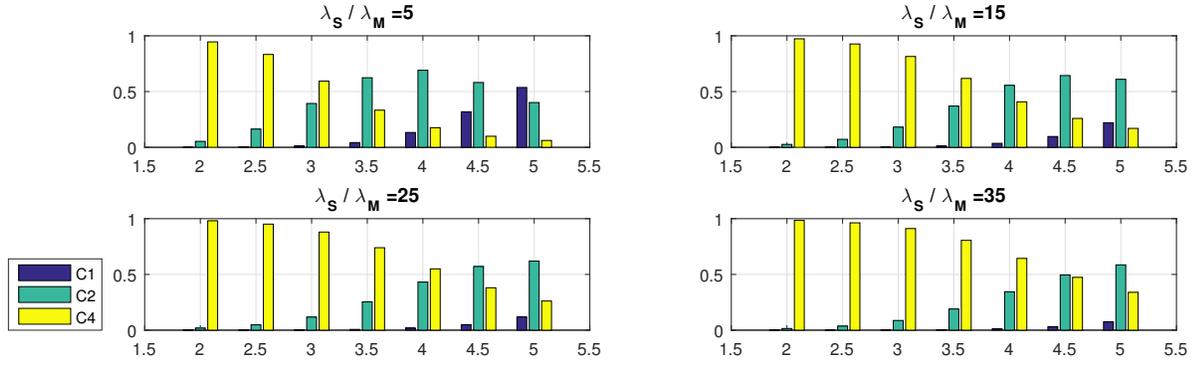}
\caption{{Joint association probabilities of all three possible association cases (y-axis) versus $\alpha_S$ (x-axis) for different densities $\lambda_S/\lambda_M$ of BSs.}}
\label{fig:Pathloss_Exponent_Impact}
\end{figure*}

\begin{figure}[]%
\centering
\subfigure[NLOS]{%
\label{fig:dist_distribution_NLOS}%
\includegraphics[width=\columnwidth]{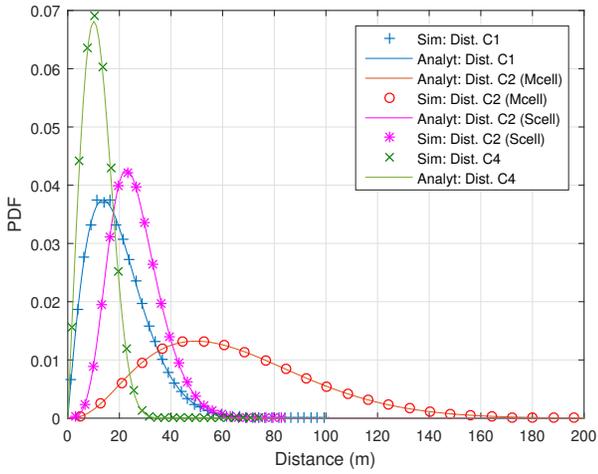}}%
\qquad
\subfigure[LOS]{%
\label{fig:dist_distribution_LOS}%
\includegraphics[width=\columnwidth]{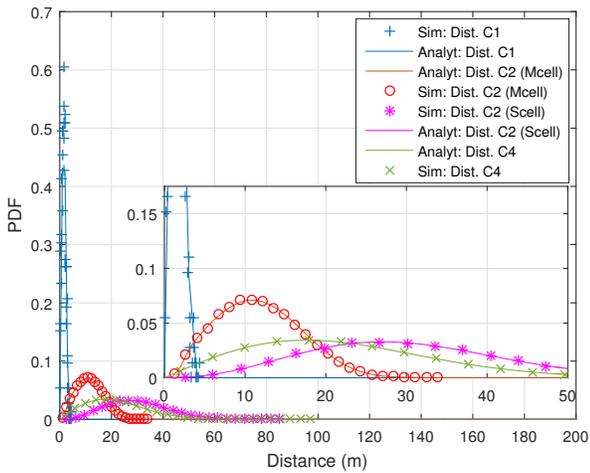}}%
\caption{Distance distributions of a typical UE to its serving BSs.}
\end{figure}

\subsection{Distance Distributions}

Fig. \ref{fig:dist_distribution_NLOS} and \ref{fig:dist_distribution_LOS} shows the distance distributions of a typical UE to its serving BSs for NLOS and LOS cases, respectively. An interesting observation from these  results is that even though the change in the quality of links (i.e., NLOS and LOS)  is only accounted for the millimeter wave or $Scell$ BSs' links, it still affects the distance distribution  of a typical UE to the $Mcell$ BS for the association case 2. It is observed that in NLOS scenario, for the association case 2, the PDF of the distance between the $Mcell$ BS and a typical UE is spread over the range from 0 to 160 meters whereas this range shrinks to 0 to 35 meters for LOS scenario. The rationale behind this change in the distribution is that due to the high quality LOS links of $Scell$ BSs, UEs with very high probability choose to associate with $Scell$ BSs in both uplink and downlink. Therefore, in this scenario decoupling only happens when sub-6GHz or $Mcell$ BSs' link quality is superior to its counterpart and it would only happen if the distance between a typical UE and $Mcell$ BS is significantly less than its counterpart.

\subsection{Spectral Efficiency}

Now we examine the main result i.e., the comparison of spectral efficiency of decoupled and coupled access. It should be obvious that the only difference between the decoupled and coupled access modes are those UEs who choose $Scell$ BSs for uplink in association case 2. Therefore, we only plot and compare the average uplink spectral efficiency of decoupled and coupled access modes. Fig. \ref{fig:Avg_UL_NLOS} compares the average spectral efficiencies of both coupled and decoupled access modes for the NLOS scenario whereas, \ref{fig:Avg_UL_LOS} shows the same comparison for LOS scenario. Even though for the NLOS scenario in Fig. \ref{fig:Avg_UL_NLOS}, a considerable improvement in spectral efficiency can be observed but whether this gain is good enough from pragmatic point of view is still an open question.  

\subsection{Discussion}

There is no doubt that in theory the decoupled access outperforms its coupled counterpart but whether the cost it comes with (i.e., control signals overhead ) makes it viable or not is not a trivial question to answer. From a pragmatic point of view, any proposal for a disruptive change in a system configuration as radical as the decoupled access should be seen from a very critical lens. Therefore following are the key critical insights we can take from this study.

\begin{itemize}
\item Decoupled access does not look a viable option to improve spectral efficiency of an overall system.
\item As previous studies \cite{elshaer2016downlink, zhang2017uplink} suggested that with the help of power biasing, decoupled access would be a viable option for load balancing in the next generation of communication systems. But as we know that the next generation of communication systems are envisioned as part of a dense or an ultra dense network where the density of access points would be greater than the density of UEs\cite{7476821}, hence, whether would there be any need of load balancing in such network settings is a question worth investigating.
\item Decoupled access breaks the channel reciprocity by its very design, therefore, a big challenge is to come up with an cost effective channel estimation scheme. Since at the end it would be the control signals overhead which concretely defines the viability of decoupled access mode. 
\end{itemize}

\begin{figure}%
\centering
\subfigure[NLOS]{
\label{fig:Avg_UL_NLOS}%
\includegraphics[width=\columnwidth]{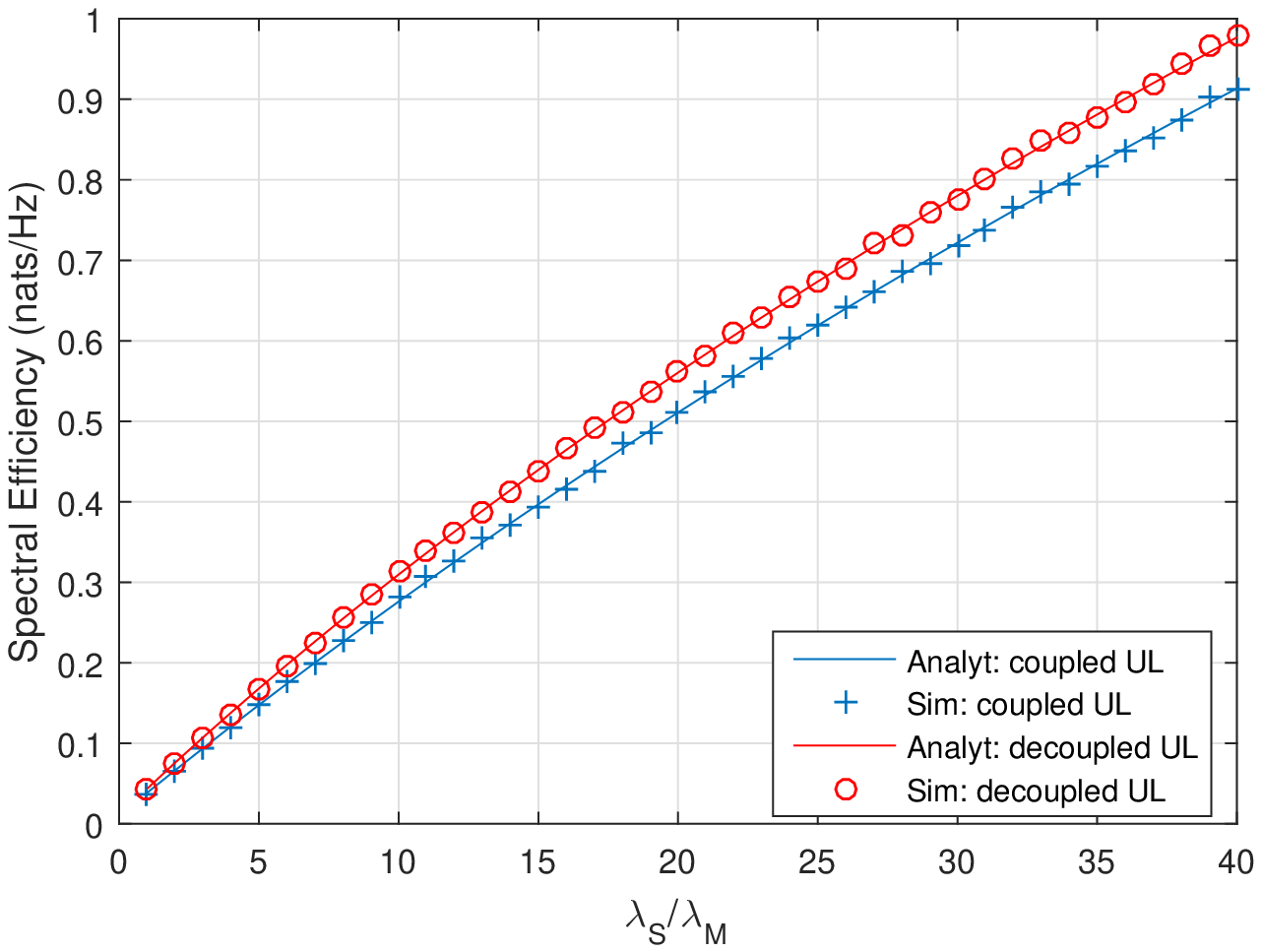}}%
\qquad
\subfigure[LOS]{
\label{fig:Avg_UL_LOS}%
\includegraphics[width=\columnwidth]{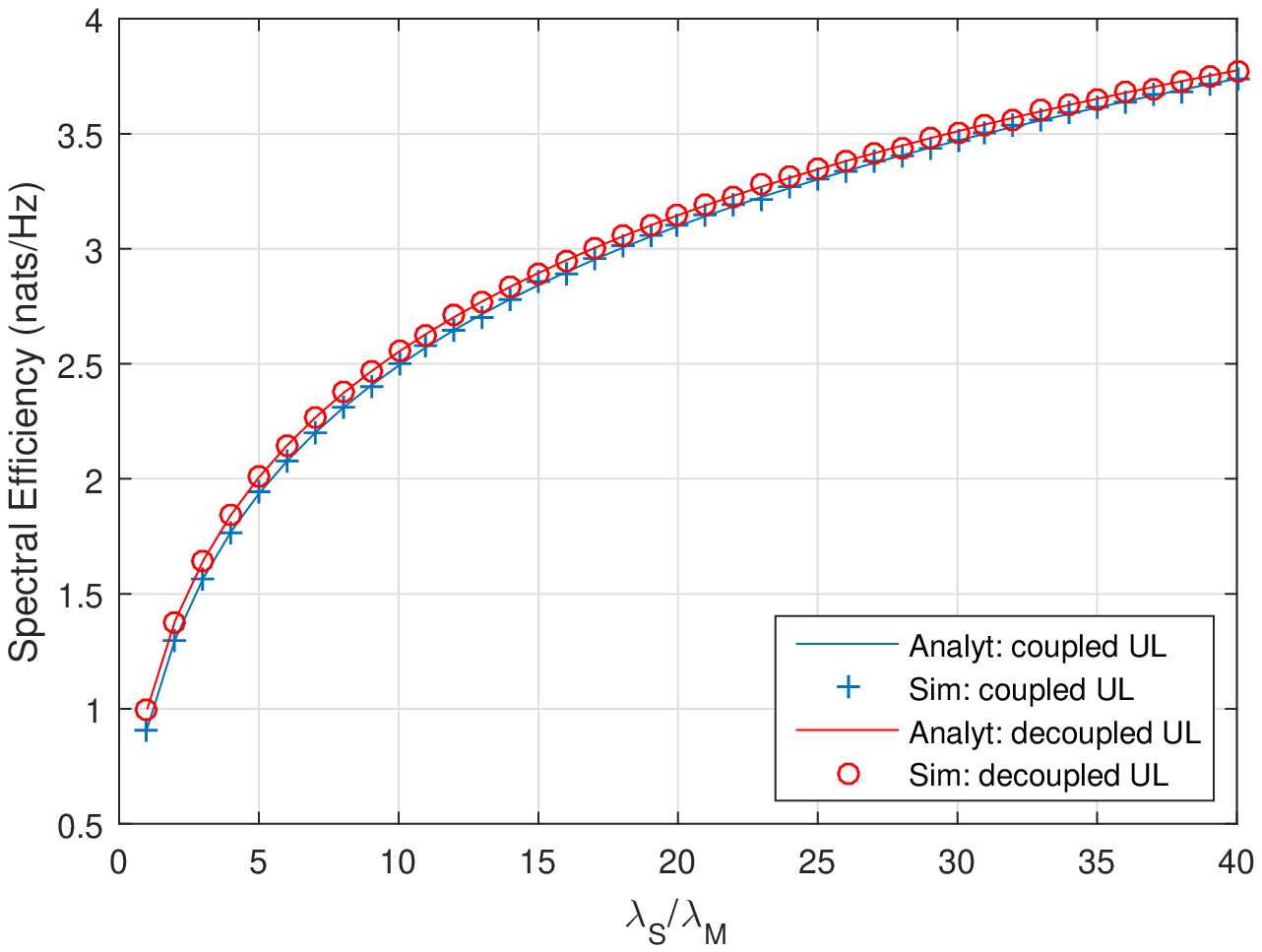}}%
\caption{Comparison of average uplink spectral efficiency of coupled and decoupled access modes.}
\end{figure}

\begin{floatEq}
\begin{eqnarray}
\label{proof_lemma_pr_case_1}
\Pr ({\rm{Case}}1) &=& 2\pi {\lambda _S}\int\limits_0^\infty  \left( {1 - \exp \left( { - \pi {\lambda _M}{{\left( {\frac{{{Q_M}}}{{{Q_S}}}} \right)}^{\frac{2}{{{\alpha _M}}}}}{x^{\frac{{2{\alpha _S}}}{{{\alpha _M}}}}}} \right)} \right) \exp \left( { - \pi {\lambda _S}{x^2}} \right)x\,dx\nonumber \\ 
 &=& 1 - 2\pi {\lambda _S}\int\limits_0^\infty  \left( {\exp \left( { - \pi {\lambda _M}{{\left( {\frac{{{Q_M}}}{{{Q_S}}}} \right)}^{\frac{2}{{{\alpha _M}}}}}{x^{\frac{{2{\alpha _S}}}{{{\alpha _M}}}}}} \right)} \right) \exp \left( { - \pi {\lambda _S}{x^2}} \right)x\,dx.
\end{eqnarray}
\end{floatEq}

\section{Conclusion}
\label{sec:conclusions}
In this paper, we did a comparative analysis of decoupled and coupled access modes based on the stochastic geometry. We constructed a two-tier HetNet (i.e., sub-6Ghz and millimeter wave) where BSs of both tiers are modeled as independent PPP. In our analytical model, we accommodated two different pathloss exponents for two radically different tiers of BSs with respect to their frequency bands.  Therefore, we used Fox's H-function to derive joint probability of associations in closed form, which eventually results in a rather simple, compact and modular expressions for spectral efficiency. Our analytical and simulation results validate each other and illustrate the effect of decoupled access on distance distributions of serving BSs and average spectral efficiency. We observed that though decoupling can improve the uplink spectral efficiency but still that improvement is rather small from a system level point of view. Therefore, whether the decoupled access is a viable option for the next generation of communication systems depends only on a comprehensive cost analysis of control signals overhead.

\appendices
\section{Proof of Lemma 1}
\label{Lemma1_3}
Using (\ref{eq:nearest_distance_BS}) and (\ref{eq:nearest_distance_BS_2}), we can express the probabilistic expression of (\ref{eq:pr_association_case1_1}) in the form shown in (\ref{proof_lemma_pr_case_1}).

Since, the integral given in (\ref{proof_lemma_pr_case_1}) is over the product of two exponentials with different powers of variable $x$, there is no straightforward way to solve this integral. Therefore, we use the theory of Fox's H-function to express this integral in closed form.

\begin{equation} 
\label{proof_lemma_pr_case_1_2}
\begin{split}
\Pr ({\rm{Case}}1)\mathop  = \limits^{(a)} 1 - 2\pi {\lambda _S}\int\limits_0^\infty \left( {\frac{1}{2}\frac{{{\alpha _M}}}{{{\alpha _S}}}}{{\rm{H}}_{0,1}^{1,0}} \left[ {{\xi_1x}\left| {\begin{array}{*{20}{c}}
{( - , - )}\\
{(0,\frac{1}{2}\frac{{{\alpha _M}}}{{{\alpha _S}}})}
\end{array}} \right.} \right] \right.\\ 
\left.\cdot {\frac{1}{2}}{\rm{H}}_{0,1}^{1,0}\left[ {{\xi_2x}\left| {\begin{array}{*{20}{c}}
{( - , - )}\\
{(0,\frac{1}{2})}
\end{array}} \right.} \right]x\right)\,dx \\
\mathop  = \limits^{(b)} 1 - \frac{1}{2}\frac{{{\alpha _M}}}{{{\alpha _S}}}{\rm{H}}_{1,1}^{1,1}\left[ {{z_1}\left| {\begin{array}{*{20}{c}}
{\left( {0,\frac{1}{2}} \right)}\\
{\left( {0,\frac{1}{2}\frac{{{\alpha _M}}}{{{\alpha _S}}}} \right)}
\end{array}} \right.} \right].
\end{split}
\end{equation}

In (\ref{proof_lemma_pr_case_1_2}), (a) is a direct result of   equation (2.9.4) in \cite{kilbas2004h}; (b) follows from Theorem 2.9 in \cite{kilbas2004h}, here, $\xi_1$ = ${\left( {\sqrt {\pi {\lambda _M}} } \right)^{\frac{{{\alpha _M}}}{{{\alpha _S}}}}}{\left( {\frac{{{Q_M}}}{{{Q_S}}}} \right)^{\frac{1}{{{\alpha _S}}}}}$ and $\xi_2$=$\left( {\sqrt {\pi {\lambda _S}} } \right)$.

Similarly, the proofs of lemma \ref{lemma_2} and \ref{lemma_3} can be obtained by following the same steps as adopted here. 
\section{Proof of Lemma 4}
\label{Lemma4_6}
The distance of all the UEs connected to a $Mcell$ BS in the association case 1 satisfies $(X_M^{ - {\alpha _M}} > \frac{{{{\bar Q}_S}}}{{{{\bar Q}_M}}}X_S^{ - {\alpha _S}}) $. Therefore, the complementary CDF of the distance of UEs to their serving BS is formulated as

\begin{equation}
\label{proof_lemma_dist_case_1}
\begin{split}
F_{{X_M}|{\rm{Case}}1}^c = \Pr \left( {{X_M} > x|{X_S} > {{\left( {\frac{{{\bar Q_S}}}{{{\bar Q_M}}}} \right)}^{\frac{1}{{{\alpha _S}}}}}{X_M}^{\frac{{{\alpha _M}}}{{{\alpha _S}}}}} \right)
\\
= \frac{{\Pr \left( {{X_M} > x;{X_S} > {{\left( {\frac{{{\bar Q_S}}}{{{\bar Q_M}}}} \right)}^{\frac{1}{{{\alpha _S}}}}}{X_M}^{\frac{{{\alpha _M}}}{{{\alpha _S}}}}} \right)}}{{\Pr ({\rm{Case}}1)}}
\\
= \frac{{\int\limits_x^\infty  {\exp \left( { - \pi {\lambda _S}{{\left( {\frac{{{\bar Q_S}}}{{{\bar Q_M}}}} \right)}^{\frac{2}{{{\alpha _S}}}}}{x_M}^{\frac{{2{\alpha _M}}}{{{\alpha _S}}}}} \right) \cdot } {f_{{X_M}}}({x_M})d{x_M}}}{{\Pr ({\rm{Case}}1)}}.
\end{split}
\end{equation}

Where ${\Pr ({\rm{Case}}\,1)}$ and ${f_{{X_M}}}$ are given in equations (\ref{eq:pr_association_case1_H}) and (\ref{eq:nearest_distance_BS}), respectively. The cdf of the distance to the serving BS for the association case 1 is $F_{{X_M}|{\rm{Case}}1}=1-F_{{X_M}|{\rm{Case}}1}^c$, and by simply differentiating this cdf, we derive the pdf of the distance to the serving BS for the association case 1.

Similarly, proofs of lemma \ref{lemma:5} and \ref{lemma:6} can be obtained following the same steps. 
\section{Proof of Theorem \ref{Th:Avg_user_rate_d}}
\label{appendix_avg_user_rate}
Following the approach adopted in  \cite{andrews2011tractable}, the average user rate $\tau  \buildrel \Delta \over = W \bE \left[ {\ln (1 + {\rm{SINR}})} \right]$. Here we derive the average user rates, uplink and downlink separately, for the cell association case 2. Same steps can be followed to derive the average user rate expressions for the other two cases where

\begin{eqnarray}
\label{proof_avg_user_rate}
{\tau _{{\rm{UL}}{\rm{,Case2}}}} &\buildrel \Delta \over =& W_S\bE\left[ {{\rm{ln}}\left( {{\rm{1 + SN}}{{\rm{R}}_{{\rm{UL}}{\rm{,}}S}}{\rm{(}}x{\rm{)}}} \right)} \right] \cr
 &\mathop  = \limits^{(c)}&W_S \int\limits_{x > 0}^{} {\int\limits_{t > 0} \left( {\exp \left( -{\frac{{{x^{{\alpha _S}}}\left( {\exp \left( t \right) - 1} \right)}}{{{{\bar Q}_S}}}} \right)} \right. } \cr 
&& \qquad \qquad \qquad \qquad \left.\cdot  {f_{{X_S}|{\rm{Case2}}}}\right)\,dt\,dx,
\end{eqnarray}
(c) comes from following the same steps as given in the proof of Theorem 3 in Appendix C of \cite{andrews2011tractable}. To further solve the integrals of (\ref{proof_avg_user_rate}), we first change the order of integrals. The rationale behind this change is to exploit the properties of Fox's H-function and obtain the result in form of bivariate Fox's H-function. 

\begin{eqnarray}
\label{proof_avg_user_rate_2}
{\tau _{{\rm{UL}}{\rm{,Case2}}}} &=& W_S\int\limits_{t > 0} {\int\limits_{x > 0} {\exp \left( -{\frac{{{x^{{\alpha _S}}}\left( {\exp \left( t \right) - 1} \right)}}{{{{\bar Q}_S}}}} \right)} } \cr 
&& \qquad \qquad \qquad \qquad \qquad \cdot {f_{{X_S}|{\rm{Case2}}}}\,dx\,dt \cr 
&\mathop  = \limits^{(d)}& W_S\int\limits_{t > 0} {\int\limits_{x > 0} {\frac{1}{{{\alpha _S}}}{\rm{H}}_{0,1}^{1,0}\left[ {{\xi_{{\rm{UL}},S}x}\left| {\begin{array}{*{20}{c}}
{( - , - )}\\
{\left( {0,\frac{1}{{{\alpha _S}}}} \right)}
\end{array}} \right.} \right]} }\cr 
&& \qquad \qquad \qquad  \qquad\cdot {f_{{X_S}|{\rm{Case2}}}}\,dx\,dt,
\end{eqnarray}
where ${\xi_{{\rm{UL}},S}} = \frac{{{{\left( {\exp (t) - 1} \right)}^{\frac{1}{{{\alpha _S}}}}}}}{{{{\bar Q}_S}^{\frac{1}{{{\alpha _S}}}}}}$ and (d) follows from eq. (2.9.4) in \cite{kilbas2004h}. Similarly, we can write the exponential terms in the PDF ${f_{{X_S}|{\rm{Case2}}}}$ in the form of Fox's H-function as expressed in (\ref{proof_avg_user_rate_3}). Then, using the result of an integral involving the product of three Fox's H-function provided in \cite{mittal1972integral}, which concludes the proof of ${\tau _{{\rm{UL}}{\rm{,Case2}}}}$. The proof of ${\tau _{{\rm{DL}}{\rm{,Case2}}}}$ also follows the same steps except the fact that now we have to accommodate interference in our expression too. It is assumed that all the communication within each $Mcell$ is based on orthogonal resources, therefore, only inter cell interference is accounted. It is also assumed that each $Mcell$ has at least one active UE which causes interference to its neighboring cells. 
\begin{floatEq}
\begin{eqnarray}
\label{proof_avg_user_rate_3}
{\tau _{{\rm{UL|Case2}}}} &=& W_S\int\limits_{t > 0} {\int\limits_{x > 0} {2\pi {\lambda _S}x\frac{1}{{{\alpha _S}}}{\rm{H}}_{0,1}^{1,0}\left[ {{\xi_{{\rm{UL}},S}x}\left| {\begin{array}{*{20}{c}}
{( - , - )}\\
{\left( {0,\frac{1}{{{\alpha _S}}}} \right)}
\end{array}} \right.} \right]} }  \cdot \left( \frac{1}{2}\frac{{{\alpha _M}}}{{{\alpha _S}}}{\rm{H}}_{0,1}^{1,0}\left[ {{\xi_1x}\left| {\begin{array}{*{20}{c}}
{( - , - )}\\
{\left( {0,\frac{1}{2}\frac{{{\alpha _M}}}{{{\alpha _S}}}} \right)}
\end{array}} \right.} \right] \right. \cr 
&& \qquad \qquad \qquad \left.- \frac{1}{2}\frac{{{\alpha _M}}}{{{\alpha _S}}}{\rm{H}}_{0,1}^{1,0}\left[ {{\xi_3x}\left| {\begin{array}{*{20}{c}}
{( - , - )}\\
{\left( {0,\frac{1}{2}\frac{{{\alpha _M}}}{{{\alpha _S}}}} \right)}
\end{array}} \right.} \right] \right)  \cdot \frac{1}{2}{\rm{H}}_{0,1}^{1,0}\left[ {{\xi_2x}\left| {\begin{array}{*{20}{c}}
{( - , - )}\\
{\left( {0,\frac{1}{2}} \right)}
\end{array}} \right.} \right] \,dx\,dt.
\end{eqnarray}
\end{floatEq}

\begin{eqnarray}
\label{proof_eq_avg_dl_c2_interference}
{\tau _{{\rm{DL}},{\rm{Case2}}}} &\buildrel \Delta \over =& W_M\left[ {{\rm{ln}}\left( {{\rm{1 + SIN}}{{\rm{R}}_{{\rm{DL}},M}}(x)} \right)} \right] \cr
  &\mathop  = \limits^{(e)}&W_M \int\limits_{t > 0} {\int\limits_{x > 0} {\exp \left( { - \frac{{{x^{\alpha_M}}\left( {\exp (t) - 1} \right)}}{{{{\bar P}_M}}}} \right)} } \cr
 && \cdot\bE\left[ {\exp \left( { - \frac{{{x^{\alpha_M}}\left( {\exp (t) - 1} \right){I_{{{\rm{DL}},M}}}}}{{{{\bar P}_M}}}} \right)} \right] \cr 
 && \qquad \qquad \qquad \qquad \cdot {f_{{X_M}|{\rm{Case2}}}} \,dx\,dt,
\end{eqnarray}    
where (e) again follows the similar steps as listed in the proof of Theorem 3 in Appendix C of \cite{andrews2011tractable}. In the (\ref{proof_eq_avg_dl_c2_interference}), the expected value of the interference term can be modeled as a Laplace function, hence it can be formulated as follows

\begin{eqnarray}
\bE\left[ {\exp \left( { - \frac{{{x^{{\alpha _M}}}\left( {\exp (t) - 1} \right){I_{{{\rm{DL}},M}}}}}{{{{\bar P}_M}}}} \right)} \right] \qquad \qquad \qquad \qquad \nonumber
\end{eqnarray}
\begin{eqnarray}
&\mathop  = \limits^{(f)} & \bE\left[ {\exp \left( { - \frac{{{x^{{\alpha _M}}}\left( {\exp (t) - 1} \right)\sum\limits_{v \in {\Phi _{{\rm{IU}}}}} {{I_{{\rm{IU}}}}_{,v}} }}{{{{\bar P}_M}}}} \right)} \right] \cr
&\mathop  = \limits^{(g)}& \bE\left[ {\mathop \prod \limits_{v \in {\Phi _{{\rm{IU}}}}} \exp \left( { - \frac{{{x^{{\alpha _M}}}\left( {\exp (t) - 1} \right){I_{{\rm{IU}}}}_{,v}}}{{{{\bar P}_M}}}} \right)} \right]\cr
&\mathop  = \limits^{(h)}& \exp \bigg(  - 2\pi {\lambda _{{\rm{IU}}}} 
\cr 
&&\left. \cdot \int\limits_{y > 0} {\left( {1 - \frac{1}{{1 + \left( {\frac{{{x^{{\alpha _M}}}\left( {\exp (t) - 1} \right)}}{{{{\bar P}_M}}}} \right){{\bar P}_M}{y^{ - {\alpha _M}}}}}} \right)y\,dy}  \right)\cr 
&\mathop  = \limits^{(i)} &\exp \left(  - \pi {\lambda _{{\rm{IU}}}}{x^2}{{\left( {\exp (t) - 1} \right)}^{\frac{2}{{{\alpha ^M}}}}}\frac{2}{{{\alpha ^M}}}  \right. \cr 
&& \qquad \qquad \qquad \qquad \qquad  \left. \cdot\int\limits_{{{\left( {\exp (t) - 1} \right)}^{ - 1}}}^\infty  {\frac{{{u^{\frac{2}{{{\alpha ^M}}} - 1}}}}{{1 + u}}} du \right) \cr 
&\mathop  = \limits^{(j)} &\exp \left( { - \pi {\lambda _{{\rm{IU}}}}{x^2}G(t)} \right),
\end{eqnarray}
where (f) is simply the expectation over the distance between a typical UE and its interferers; (g) follows from the simple fact that channel between the interferers  and a typical UE is i.i.d. and it is independent from the point process of interferers $\phi_{\rm{IU}}$; (h) follows from the probability generating functional \cite{chiu2013stochastic} of the PPP, which states that $\bE\left[ {\mathop {\prod f(x)}\limits_{x \in {\Phi _{}}} } \right] = \exp \left( { - \lambda \int_{{{\rm I\!R}^2}} {\left( {1 - f(x)} \right)dx} } \right)$; (i) is a result of some trivial mathematical manipulation and change of variable. In (j), we solve the inner integral over $u$ by using the eq. given in section 3.194 of \cite{mittal1972integral}, where the  $G(t)$ is defined in (\ref{eq:Gt}).
\begin{floatEq}
\begin{eqnarray}
\label{eq:Gt}
G(t) = {\left( {\exp (t) - 1} \right)^{\frac{2}{{{\alpha ^M}}}}}\frac{2}{{{\alpha ^M}}}\left( {\frac{{{{\left( {\exp (t) - 1} \right)}^{1 - \frac{2}{{{\alpha ^M}}}}}}}{{1 - \frac{2}{{{\alpha ^M}}}}}{}_2{F_1}\left[ {1,1 - \frac{2}{{{\alpha ^M}}};2 - \frac{2}{{{\alpha ^M}}};\frac{{ - 1}}{{{{\left( {\exp (t) - 1} \right)}^{ - 1}}}}} \right]} \right)
\end{eqnarray}
\end{floatEq}
Similar to the steps adopted in the proof of ${\tau _{{\rm{UL}}{\rm{,Case2}}}}$, we can replace the exponential terms in the expression of ${\tau _{{\rm{DL}},{\rm{Case2}}}}$ with their respective Fox's H-function as formulated in  (\ref{proof_avg_user_rate_4}). Then, again using the result of an integral involving the product of three Fox's H-function provided in \cite{mittal1972integral}, we concludes the proof of ${\tau _{{\rm{DL}}{\rm{,Case2}}}}$.

\begin{floatEq}
\begin{eqnarray}
\label{proof_avg_user_rate_4}
{\tau _{{\rm{DL|Case2}}}} &=& W_M\int\limits_{t > 0} {\int\limits_{x > 0} {2\pi {\lambda _M}x\frac{1}{{{\alpha _M}}}{\rm{H}}_{0,1}^{1,0}\left[ {{\xi_{{\rm{DL}},M}x}\left| {\begin{array}{*{20}{c}}
{( - , - )}\\
{\left( {0,\frac{1}{{{\alpha _M}}}} \right)}
\end{array}} \right.} \right]} }  \cdot \left( \frac{1}{2}\frac{{{\alpha _S}}}{{{\alpha _M}}}{\rm{H}}_{0,1}^{1,0}\left[ {{\xi_4x}\left| {\begin{array}{*{20}{c}}
{( - , - )}\\
{\left( {0,\frac{1}{2}\frac{{{\alpha _S}}}{{{\alpha _M}}}} \right)}
\end{array}} \right.} \right] \right. \cr 
&& \qquad \qquad \qquad \qquad \left. - \frac{1}{2}\frac{{{\alpha _S}}}{{{\alpha _M}}}{\rm{H}}_{0,1}^{1,0}\left[ {{\xi_6x}\left| {\begin{array}{*{20}{c}}
{( - , - )}\\
{\left( {0,\frac{1}{2}\frac{{{\alpha _S}}}{{{\alpha _M}}}} \right)}
\end{array}} \right.} \right]\right)  \cdot \frac{1}{2}{\rm{H}}_{0,1}^{1,0}\left[ {{\xi_5x}\left| {\begin{array}{*{20}{c}}
{( - , - )}\\
{\left( {0,\frac{1}{2}} \right)}
\end{array}} \right.} \right] \,dx\,dt.
\end{eqnarray}
\end{floatEq}

\section{Structure of bivariate Fox's H-function }
\label{Appendix:definitions}
In this section of the paper, the structure of the bivariate Fox's H-function is defined in (\ref{eq:defnition_H}), shown on the next page.

\begin{floatEq}
\begin{eqnarray}
\label{eq:defnition_H}
\hat{\rm{H}}(k;x,y) &=&\rm{H}\left[ {\begin{array}{*{20}{c}}
{\left( {\begin{array}{*{20}{c}}
0&1\\
1&0
\end{array}} \right)}\\
{\left( {\begin{array}{*{20}{c}}
1&0\\
0&1
\end{array}} \right)}\\
{\left( {\begin{array}{*{20}{c}}
1&0\\
0&1
\end{array}} \right)}
\end{array}\left| {\begin{array}{*{20}{c}}
{\left( {\begin{array}{*{20}{c}}
{1 - 2{\beta _k};{\beta _k},{\beta _k}}\\
{ - ; - , - }
\end{array}} \right)}\\
{\left( {\begin{array}{*{20}{c}}
{ -  - }\\
{0,{\beta _{k + 1}}}
\end{array}} \right)}\\
{\left( {\begin{array}{*{20}{c}}
{ -  - }\\
{0,{\beta _{k + 2}}}
\end{array}} \right)}
\end{array}} \right.\left| {\begin{array}{*{20}{c}}
{\begin{array}{*{20}{c}}
{}\\
{}
\end{array}}\\
{\begin{array}{*{20}{c}}
{\left( {x,y} \right)}\\
{}
\end{array}}\\
{\begin{array}{*{20}{c}}
{}\\
{}
\end{array}}\nonumber
\end{array}} \right.} \right] \\
\end{eqnarray}

\end{floatEq}

\ifCLASSOPTIONcaptionsoff
  \newpage
\fi



%




\bibliographystyle{bibtex/bib/IEEEtran}
\bibliography{bibtex/bib/IEEEabrv,bibtex/bib/references}{}

\end{document}